\pgfplotsset{compat=newest}
\newtheorem{theorem}{Theorem}
\newcommand{\vc}[1]{\mathbf{#1}}
\newcommand{\hb}{\vc{h}}
\newcommand{\gb}{\vc{g}}
\newcommand{\ab}{\vc{a}}
\newcommand{\Thetab}{\vc{\Theta}}
\newcommand{\wb}{\vc{w}}
\newcommand{\xb}{\vc{x}}
\newcommand{\yb}{\vc{y}}
\newcommand{\Fb}{\vc{F}}
\begin{document}
\doublespacing  
\title{Secure Beamforming in Multi-User Multi-IRS Millimeter Wave Systems}
\author{Anahid Rafieifar, Hosein Ahmadinejad, S. Mohammad Razavizadeh, and Jiguang He  
	\thanks{Anahid Rafieifar, Hosein Ahmadinejad, and S. Mohammad Razavizadeh are with School of Electrical Engineering, Iran University of Science and Technology (IUST) (emails:\{\href{anahid{\_}rafieifar@alumni.iust.fac.ir}{anahid{\_}rafieifar@alumni.iust.ac.ir},
		\href{ho\_ahmadinejad@alumni.iust.ac.ir}{ho\_ahmadinejad@alumni.iust.ac.ir}, and \href{smrazavi@iust.ac.ir}{smrazavi@iust.ac.ir}\}).}
	\thanks{Jiguang He is with the Technology Innovation Institute, 9639 Masdar city, Abu Dhabi, United Arab Emirates,
		and he is also with Centre for Wireless Communications,
		University of Oulu, 90014 Oulu, Finland (email:\{\href{jiguang.he@tii.ae}{jiguang.he@tii.ae}\}).}
	\thanks{\copyright~2023 IEEE. Personal use of this material is permitted. However, permission to use this material for any other purposes must be obtained from the IEEE by sending a request to pubs-permissions@ieee.org.}
}

\maketitle

\begin{abstract}
We study the secrecy rate maximization problem in a millimeter wave (mmWave) network, consisting of a base station (BS), multiple intelligent reflecting surfaces (IRSs) (or reconfigurable intelligent surfaces (RISs)), multiple users, and a single eavesdropper. To ensure a fair secrecy rate among all the users, we adopt a max-min fairness criterion which results in a mixed integer problem. We first relax discrete IRSs phase shifts to the continuous ones. To cope with the non-convexity of the relaxed optimization problem, we leverage the penalty method and block coordinate descent approach to divide it into two sub-problems, which are solved by successive convex approximation (SCA). Then, we propose a low-complexity mapping algorithm where feasible IRSs phase shifts are obtained. Mathematical evaluation shows the convergence of sub-problems to a Karush-Kuhn-Tucker (KKT) point of the original ones. Furthermore, the convergence guarantee of the overall proposed algorithm and computational complexity are investigated. Finally, simulation results show our proposed algorithm outweighs the conventional solutions based on the semi-definite programming (SDP) in terms of convergence and secrecy rate, especially in a larger number of IRSs and phase shifts where SDP suffers from rank-one approximation. Maximum ratio transmission (MRT) and IRS-free systems are also considered as other benchmarks.
\end{abstract}

\begin{IEEEkeywords}
Block coordinate descent (BCD), intelligent reflecting surface (IRS), millimeter wave communication, physical layer security, reconfigurable intelligent surface (RIS), semi-definite programming (SDP), successive convex approximation (SCA). 
\end{IEEEkeywords}

\section{Introduction}
 Tremendous available bandwidth, low inter- and intra-cell interference, and small form factor of base station's (BS) antennas make millimeter wave (mmWave) as one of the most promising technologies for the fifth-generation (5G) and beyond wireless networks. However, small coverage, vulnerability to blockage, and large attenuation are issues bringing challenges to deploy such networks in practice. To tackle the aforementioned issues, increasing the number of BS's and users' antennas as well as employing relays have been proposed in literature \cite{mmWave1, mmWave2}. Intelligent reflecting surfaces (IRSs) have been recently introduced to reduce the high power consumption of antenna elements in relays and BSs \cite{IRS1, IRS2}. Besides, IRSs can enhance coverage, energy and spectral efficiency, and link quality in wireless communication systems. Commonly, IRSs are meta-surfaces with a controller and a large number of cost-effective and passive reflecting elements that can modify the signal propagation environment through changing the amplitude and phase shift of the impinging signals. Another attractive technology with a similar application to IRSs is holographic multiple-input multiple-output (MIMO) surfaces (HMIMOSs)\cite{jadid10}. Actually, HMIMOSs can serve as a transmitter, receiver,
 	or reflector, which can be considered as an extension of IRSs. Mainly, there are two categories of HMIMOSs based on the power consumption, namely, active HMIMOSs and passive ones. When HMIMOSs act as a transceiver, the active mode is adopted \cite{jadid11} and passive HMIMOSs are also supposed to be IRSs or reconfigurable intelligent surface (RISs) \cite{jadid12}.

\subsection{Related Works}
As stated earlier, IRS-assisted communication is one of the major active areas in wireless networks that has been attracting much attention \cite{Anahidconf, AH3, AH4, AH7, jadid3, jadid4, jadid5, jadid6,jadid13, jadid14 }. The first important aspect of IRSs is the ability of promoting spectral and energy efficiency in wireless networks. For example, in \cite{Anahidconf}, the performance of a multi-IRS-aided non-orthogonal multiple access (NOMA) in a cell-free massive MIMO system was elaborated. The maximization of weighted sum-rate in an IRS-assisted femtocell multiple-input single-output (MISO) system was also investigated in \cite{AH3} and \cite{AH4}. Energy-efficient design for downlink transmission of an IRS-aided multi-user MISO system was also considered in \cite{AH7}. The authors in \cite{jadid3} formulated a sum rate maximization problem in an IRS-assisted downlink MISO network. They simplified the problem by using the zero-forcing (ZF) transmission scheme and then optimized the transmit power and the continuous phase shifts by block coordinate descent (BCD) algorithm and majorization-minimization (MM) method. The similar optimization problem in an almost the same system model with continuous values for IRS's phase shifts is also formulated in \cite{jadid4}. Due to practical considerations, IRSs are not capable of adopting continuous phase-shifts, those only can accept a set of discrete ones. As a result, the problem of optimizing IRSs phase shifts turns from continuous one into a high complex integer programming. To deal with, continuous IRSs phase shift combined with semi-definite relaxation is suggested. In \cite{jadid5}, the authors maximized the weighted sum-rate for an IRS-aided multi-user system by designing a two-timescale joint active and passive beamforming algorithm under the assumption of discrete phase shifts. Due to passive elements of IRSs as well as a large number of communication links in IRS-based networks, channel estimation brings a new challenge. Thus, some novel methods have been  introduced to tackle this difficulty. For instance, \cite{jadid6} proposed a channel estimation scheme based on the discrete Fourier transform. In \cite{jadid13}, Wei et al. proposed a method for channel estimation based on the parallel factor decomposition algorithm in order to unfold the resulting cascaded channel. In \cite{jadid14}, the authors developed an algorithm based on estimation error minimization.

The benefits that are got from bringing IRSs in micro-wave wireless networks in terms of spectral and energy efficiency provide the motivation to extend these systems to the mmWave counterpart. Some poor aspects of mmWave networks such as sensitivity to blockage, small coverage region, and large attenuation, make IRS perfectly matched with these networks \cite{AH2,AH5,AH6, AH1}. In \cite{AH2}, the problem of weighted sum-rate maximization by jointly optimizing active and passive beamformers in a multi-user IRS-based mmWave system was investigated. Besides, the authors in \cite{AH5} and \cite{AH6} proposed an IRS-based mmWave system to maximize downlink sum-rate and minimize users' uplink transmit powers, respectively. Due to severe path-loss and sensitivity to blockage of mmWave systems compared with micro-wave alternative, employing multiple IRSs is recommended to create a hands-on degree-of-freedom to combat these issues. For example in \cite{AH1}, the problem of joint active and passive beamforming in the downlink of a single-user MISO mmWave system is studied where two cases single-IRS and multi-IRSs are considered to maximize the received signal power.   

Apart from the valuable effects of IRSs on spectral and energy efficiency, they have opened a new research area using their own beamforming ability in improving physical layer security\cite{AH8, AH9, AH12, AH10, AH13, AH11, jadid7,jadid8}.
In \cite{AH8, AH9, AH12, AH10}, IRS-assisted wiretap communications in the presence of an eavesdropper were considered which aimed to improve the physical layer security by maximizing the secrecy rate. In \cite{AH12}, the authors also assumed both continuous and discrete IRS reflecting coefficients and solved the proposed problem from a different point of view using the path-following algorithm in an iterative manner. The authors in \cite{AH10} also rose to the challenging scenario where eavesdroppers' channels had better quality than the legitimate users. Furthermore, in \cite{AH13}, the transmit power minimization problem in a system composed of one BS, one IRS, a single-antenna user, and one eavesdropper, under the constraint of minimum secrecy rate and unit modulus of phase shifts was analyzed. Such an objective is achieved by jointly optimizing transmit power and passive beamforming. Utilizing artificial noise (AN) in IRS-assisted systems is also another issue that was studied in \cite{AH11,jadid7} to enhance the achievable secrecy rate in the presence of eavesdroppers.
Instead of using AN in \cite{AH11,jadid7}, in \cite{jadid8} the authors considered a friendly jammer to cooperate with the transmitter, fight against multiple eavesdroppers, and maximize the secrecy rate.

Advantages of IRS networks have opened a new research area in terms of secrecy rate maximization in mmWave networks \cite{AH14,AH15, jadid1, jadid2, Anahid, AH16, AH17}. It does not seem convenient at first owning to high directivity in mmWave systems. However, drastic blockage as well as devastating path-loss cause lower degree-of-freedom for the BSs to control the beamforming pattern. That is to say, it is less likely to combat against the eavesdropper that is close to its target because of the lower capability of BS to adjust the transmission in order to suppress the eavesdropper signal. Motivated by this issue, the authors in \cite{AH14, AH15, jadid1, jadid2} considered single-IRS-assisted wireless communication systems in the presence of an eavesdropper in a mmWave scenario.
 Then, BS's and IRSs' beamforming vectors were jointly optimized to increase the secrecy rate. Due to the lower number of communication links and simplifying the proposed optimization problems, these works considered only one IRS. For further simplification, some also investigated exclusively single user scenario. There are a limited number works \cite{Anahid, AH16, AH17} which elaborated on employing multiple IRSs in a secure scenario. In our previous work \cite{Anahid}, we maximized secrecy rate by jointly optimizing active and passive beamforming at the BS and multiple IRSs, respectively. In that work, the optimization problem relevant to just a single user were solved.  It was highlighted that, multi-IRS systems can be more helpful than single-IRS ones especially for mmWave communication with limited coverage. In \cite{AH16}, secrecy rate was maximized by optimizing multiple IRSs' phase shifts as well as their on-off states. An extension of this work can be also found in \cite{AH17}. In these works also some theoretic assumptions such as considering line-of-sight (LoS) channels among IRSs, user, and eavesdropper have been considered which are not fully applicable in practical systems. Also one common problem of this multi-IRS systems is to investigate just only a single user. Moreover, these works mainly employed semi-definite programming-based (SDP-based) solution which in practice will suffer from a large number of IRSs due to the indispensable approximations. Therefore, the existence of a general system model with practical assumptions, formulation and solutions in such sophisticated scenarios is missing. 

Apart form the system model, IRS-based beamforming inherently imposes optimization-relevant challenges such as non-convex nature of unitary modulus constraint, discrete values of phase shifts as well as a particular form of beamforming matrices. Mostly, passive beamforming is performed in conjunction with active beamforming or power allocation. The discretized values of phase shifts are also mainly relaxed to continuous ones in order to turn the utterly complex integer problem into a handy one \cite{jadid1}. BCD is mostly used to break the problems into multiple simpler sub-problems \cite{AH1,AH5,AH8,jadid6, AH10, jadid7, jadid8, jadid1, jadid2, Anahid}. On the other hand, semi-definite programming (SDP) is also the most the common approach to deal with non-convexity of each sub-problem \cite{jadid6, AH10, jadid7, jadid8, jadid1, jadid2, Anahid, AH16, AH17}. Unfortunately, as SDP problem is formulated using definition of auxiliary variables, it leads to a non-convex rank-one constraint.  Gaussian randomization and/or eigen-value decomposition (EVD) methods are usually proposed to offset this problem. It has been shown, they can lead to acceptable results in single-IRS system with a small number of optimization variables. However, in multi-IRS systems with a large number of IRS phase shifts, SDP-based solution suffers in terms of computational complexity and optimality of solutions. 

After a comprehensive literature review, we find that there are few works that employs IRSs in mmWave networks for secure communication. In the existing works, simplified deployment such as either a single-user or single-IRS scenario has also been considered. In addition, in the existing multi-IRS system, SDP-based solution is mainly used. These observations motivate the study on the secrecy rate of a multi-user multi-IRS mmWave system in this paper.  
\subsection{Contributions} 
As stated earlier, in the literature \cite{AH14,AH15,jadid1, jadid2, Anahid, AH16, AH17}, a combination of mmWave communication with IRSs was investigated to improve the network security. However, all these works suffer to some extent. For instance, in \cite{AH14,AH15,jadid1, jadid2} to simplify the problem formulation and solution only the performance of a single IRS was investigated, which provided the motivation for employing multiple IRSs in secure mmWave communication in \cite{Anahid, AH16, AH17}. Nevertheless, these works only took into consideration single-user scenarios and assumed the LoS channels among IRSs, user, and eavesdropper to make their problems easy to tackle. Apart from that, they used SDP-based solutions with relaxations  which for a large number of reflecting elements in multiple IRSs make the usage questionable. To deal with these issues, in this paper, we study the physical layer security in a multi-user multi-IRS mmWave network. Maximization of the minimum secrecy rate (e.g., max-min problem) by jointly optimizing active and passive beamforming is investigated. Our proposed solution relies on continuous relaxation, BCD approach, successive convex approximation (SCA), and penalizing methods which outperforms the alternatives including the SDP-based one in the literature. Our main contributions can be summarized as follows:
\begin{itemize}
	\item We employ multiple IRSs in a multi-user mmWave network to improve the secrecy rate where BS has low degree-of-freedom to battle against eavesdropper due to high sensitivity to blockage and severe path-loss. In our proposed system model, both BS and IRSs are beamforming-capable in the presence of an eavesdropper. In addition, we rise to a challenging scenario, where the eavesdropper benefits from better channel condition than the legitimate user.
	\item Contrary to existing works, we consider direct links between BS and users as well as BS and eavesdropper. In addition, we assume more realistic and general channel models where both BS-user and IRS-user channels have multiple paths, including one LoS path and multiple non-LoS paths.
	\item As opposed to the literature solution methods mainly based on SDP, we solve the optimization problem with a novel approach. In which, at first we relax the hands-on discrete phase shift constraint to the continuous one. Then, using convex-concave decomposition method combined with penalty function approach more tractable constraints are formed. After that, we use BCD method to break the relaxed problem into two sub-problems, namely, active beamforming and passive beamforming. Each sub-problem is solved using SCA. Finally, feasible optimization variables are derived through simple but effective mapping algorithm.
	\item The convergence behaviors of the proposed algorithms are investigated. In particular, convergence to a Karush-Kuhn-Tucker (KKT) point for each sub-problem is mathematically proven. Convergence of the overall proposed algorithm is also analytically shown. Furthermore, the complexity analyses of all the algorithms are conducted.
	\item Finally, simulation results show that the IRS-assisted systems with the proposed scheme outperform the conventional SDP-based solution in terms of complexity as well as secrecy rate. IRS-free and maximum ratio transmission (MRT) are also considered as other benchmarks.
\end{itemize}
\subsection{Organization}
Section \ref{sec2} introduces system and signal model. In Section \ref{sec3}, secrecy rate maximization problem is formulated. In Section \ref{sec4}, we propose an algorithm to solve the optimization problem. Section \ref{sec5} studies the complexity and convergence of the proposed algorithm. Finally, Section \ref{sec6} presents the simulation results followed by conclusions in Section \ref{sec7}.
\section{System, Channel, and Signal Model}
\label{sec2}
\subsection{System Model}
MmWave systems suffer from low degrees-of-freedom compared with a micro-wave alternative in terms of signal propagation control due to the high-sensitivity to blockage and large attenuation. This issue can cause many problems during the presence of an eavesdropper because of limitations of BS to reduce the information leakage in mmWave systems. In such systems, employing IRSs can highly help with combating the security issue. Motivated by this, in this paper, we consider a mmWave downlink network consisting of one BS, $L$ IRSs, and $K$ legitimate users in the presence of an eavesdropper, as shown in Fig. \ref{fig1}. The BS and the $l$-th IRS are equipped with $M$ antennas and $N_l$ reflecting elements, respectively. The IRSs' reflecting elements are capable of changing the phase of the received signals in order to steer them towards the users' direction. To avoid unnecessary optimization complexity as well as focus on investigating the effect of multiple IRSs on the system performance in a secure communication scenario, we assume that both users and eavesdropper are equipped with a single antenna.
   \begin{figure}[t]
	\centering
	\includegraphics[width=9cm, height=8cm]{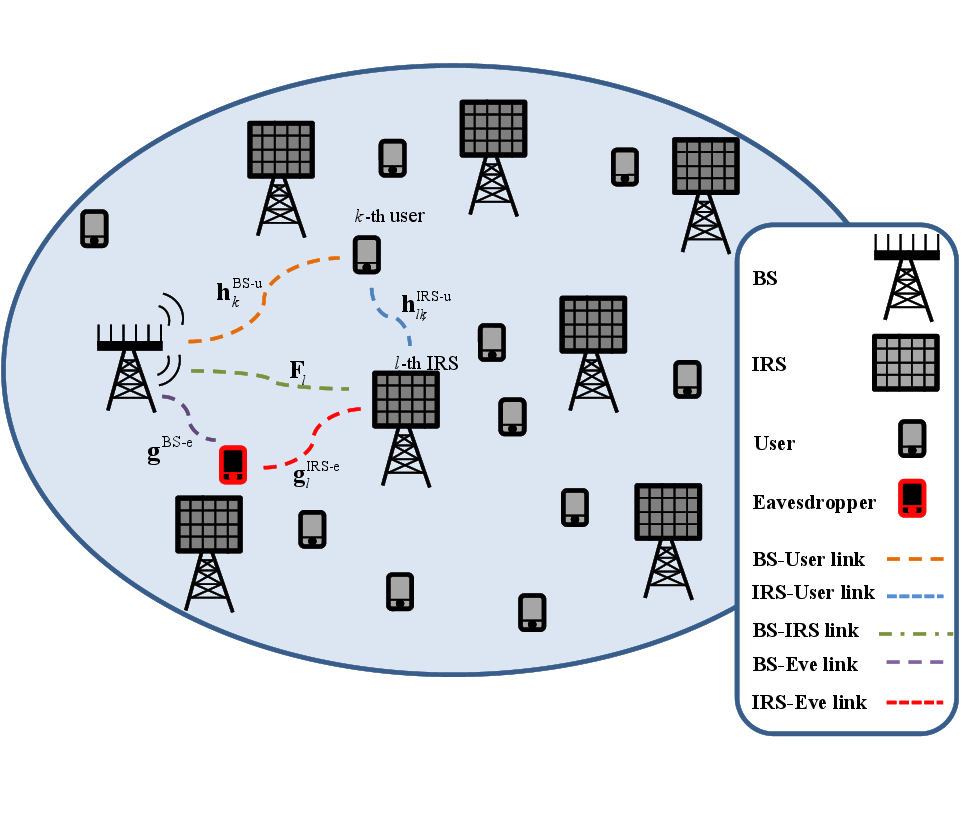}
	\caption{The secure communication system consisting of multiple users and IRSs, a single BS, and one eavesdropper.}
	\label{fig1}
\end{figure}	 
\subsection{Channel Model}
The channels from the BS to the $l$-th IRS, from the BS to the $k$-th user, from the BS to the eavesdropper, from the $l$-th IRS to the $k$-th user, and from the $l$-th IRS to the eavesdropper are denoted by 
   $\Fb_l\in\mathbb{C}^{N_l \times M}$ , $\hb_{k}^{\text{BS-u}}\in\mathbb{C}^{M \times 1}$, $\gb^{\text{BS-e}}\in\mathbb{C}^{M \times 1}$, $\hb_{l,k}^{\text{IRS-u}}\in\mathbb{C}^{N_l \times 1}$, and $\gb_{l}^{\text{IRS-e}}\in\mathbb{C}^{N_l \times 1}$, respectively. It is assumed that eavesdropper is connected to the BS and pretends to be a legitimate user and BS is also supposed to know the existence of an un-authorized user using some collected side information. In situations when the eavesdropper is not directly connected to the network, some methods such as CSI feedback or employing power leakage of eavesdroppers can be employed to acquire the eavesdropper's CSI \cite{EVE_CSI1, EVE_CSI2}. Thus, eavesdropper's channel coefficients are assumed to be known at the BS-side just as the legitimate users. Here, we use channel models similar to those in \cite{Anahid} and also make the assumption that the channels follow the block fading model. Furthermore, it is assumed that all channels' coefficients are known and perfectly estimated at the BS-side using the proposed channel estimation methods in \cite{jadid6, jadid13,jadid14,CSI1, CSI2, CSI3, CSI4}. The effect of Doppler frequency shift is also assumed to be completely compensated at the user-side. 
   \\
   The channel from the BS to the $k$-th user $\hb_k^{\text{BS-u}}$ is modeled according to the geometric channel model of the mmWave communications \cite{channel1} and can be expressed as 
   \begin{align}
   \label{eq1_channel}
   	\hb_k^{\text{BS-u}} = \frac{\sqrt M}{B} \sum_{\beta=1}^{B} \rho_{\beta,b}^k G_b  \ab_b(\phi_{\beta,b}^k),
   \end{align}
   where $B$ is total number of paths between BS and the $k$-th user, $G_b$ is the BS antenna gain, $\ab_b(\phi_{\beta,b}^k)$ is the normalized BS array response vector at azimuth angle of departure $\phi_{\beta,b}^k \in [0,2\pi]$, and $\rho_{\beta,b}^k \sim \mathcal{CN}(0,10^{-0.1\text{PL}_1 (d_{1,k})})$ is the complex gain of the $\beta$-th path between the BS and the $k$-th user. The path-loss can be obtained as \cite{channel2}
      \begin{align}
      \label{eq2_channel}
   	\text{PL}_1 (d_{1,k}) \: [\text{dB}] = \mu_1 + 10 \kappa_1 \: \log_{10} (d_{1,k}) + \xi_1,
   \end{align}
   where $d_{1,k}$, $\mu_1$, and $\kappa_1$ denote the distance between the BS and the $k$-th user, constant path-loss term, and path-loss exponent, respectively. In addition, $ \xi_1 \sim \mathcal{N}(0,\sigma_{\xi_1}^2) $ where $\sigma_{\xi_1}^2$ is shadowing variance. 
   
   Similar to \eqref{eq1_channel}, the BS-to-eavesdropper channel can be modeled as 	
   \begin{align}
   \label{eq3_channel}
   	\gb^{\text{BS-e}} = \frac{\sqrt M}{B} \sum_{\beta=1}^{B} \rho_{\beta,b}^e G_b \ab_b(\phi_{\beta,b}^e),
   \end{align}
   where $\rho_{\beta,b}^e \sim \mathcal{CN}(0,10^{-0.1\text{PL}_1 (d_{1,e})})$ is the complex gain of the $k$-th path between the BS and eavesdropper. Here, $d_{1,e}$ is the distance between the BS and eavesdropper. $\ab_b(\phi_{\beta,b}^e)$ is the normalized BS array response vector at azimuth angle of departure  $\phi_{\beta,b}^e \in [0,2\pi]$. 
   
   Assuming that the BS and IRSs are located at high altitudes, the channels between them are LoS dominant. Thus, the BS-to-the $l$-th IRS channel can be modeled as a rank-one matrix \cite{AH5}
   \begin{align}
   \label{eq4_channel}
   	\Fb_l = \sqrt{MN_l} \rho_b^l G_b \ab_l(\phi_l^b,\psi_l^b) \ab_b^H(\phi_b^l), \: l \in \{1,2,.
   	..,L\},
   \end{align}
   where $\rho_b^l\sim \mathcal{CN}(0,10^{-0.1\text{PL}_2 (d_{2,l})})$ denotes the complex gain of the channel that $\text{PL}_2 (d_{2,l})$ can be obtained as 
   \begin{align}
   \label{eq5_channel}
   \text{PL}_2 (d_{2,l}) \: [\text{dB}] = \mu_2 + 10 \kappa_2 \: \log_{10} (d_{2,l}) + \xi_2,
   \end{align}
   where $d_{2,l}$, $\mu_2$, and $\kappa_2$ denote the distance between the BS and the $l$-th IRS, constant path-loss term, and path-loss exponent, respectively. In addition, $ \xi_2 \sim \mathcal{N}(0,\sigma_{\xi_2}^2) $ where $\sigma_{\xi_2}^2$ is shadowing variance. 
   Furthermore, in \eqref{eq4_channel}, $\phi_l^b\in [0,\pi]$, and  $\psi_l^b\in [0,\pi]$ represent the elevation and azimuth angles of arrival for the $l$-th IRS and $\phi_b^l\in [0,2\pi]$ is the azimuth angle of departure for the BS. In \eqref{eq4_channel}, $\ab_b(\phi_b^l)$ is the normalized BS array response vector. $\ab_l(\phi_l^b,\psi_l^b)$ is also the normalized $l$-th IRS array response vector, that is denoted by 
   \begin{align}
   \label{eq6_channel}
   	\ab_l(\phi_l^b,\psi_l^b) = \ab_l^{az}(\phi_l^b,\psi_l^b) \otimes \ab_l^{el}(\phi_l^b),
   \end{align}
   where $\ab_l^{az}(\phi_l^b,\psi_l^b) $ and $\ab_l^{el}(\phi_l^b)$ are the horizontal and vertical array response vector of the $l$-th IRS, respectively. $\phi_l^b\in [0,\pi]$, and $\psi_l^b\in [0,\pi]$ also represent the elevation and azimuth angles of arrival for the $l$-th IRS, respectively. Furthermore, $\otimes$ denotes the Kronecker product operator.
   
   Similar to \eqref{eq1_channel} and \eqref{eq3_channel}, the $l$-th IRS-to-user and $l$-th IRS-to-eavesdropper channels can be respectively given by 
   \begin{align}
   \label{eq7_channel}
   	\hb_{l,k}^{\text{IRS-u}} = \frac{\sqrt N_l}{B} \sum_{\beta=1}^{B} \rho_{\beta,l}^k \ab_l(\phi_{\beta,l}^k,\psi_{\beta,l}^k),
   \end{align}
   and 
   \begin{align}
   \label{eq8_channel}
   	\gb_{l}^{\text{IRS-e }} = \frac{\sqrt N_l}{B} \sum_{\beta=1}^{B} \rho_{\beta,l}^e \ab_l(\phi_{\beta,l}^e,\psi_{\beta,l}^e),
   \end{align}
   where $\rho_{\beta,l}^k \sim \mathcal{CN}(0,10^{-0.1\text{PL}_1 (d_{3,l,k})})$ and $\rho_{\beta,l}^e \sim \mathcal{CN}(0,10^{-0.1\text{PL}_1 (d_{3,l,e})})$ are the complex gain of the $k$-th path of $l$-th IRS-to-user and $l$-th IRS-to-eavesdropper channel, respectively.
   $\phi_{\beta,l}^k\in [0,\pi]$ and $\psi_{\beta,l}^k\in [0,\pi]$ represent the elevation and azimuth angles of departure for the $l$-th IRS, respectively. In addition, $\phi_{\beta,l}^e\in [0,\pi]$ and $\psi_{\beta,l}^e\in [0,\pi]$ represent the elevation and azimuth angle of departure for the $l$-th IRS.
   $\ab_l(\phi_{\beta,l}^k,\psi_{\beta,l}^k)$ and $\ab_l(\phi_{\beta,l}^e,\psi_{\beta,l}^e)$ are the normalized array response vectors of the $l$-th IRS-user and the $l$-th IRS-eavesdropper paths, respectively. Definition of $\ab_l(\phi_{\beta,l}^k,\psi_{\beta,l}^k)$ and $\ab_l(\phi_{\beta,l}^e,\psi_{\beta,l}^e)$ are similar to \eqref{eq6_channel} based on Kronecker product of the $l$-th IRS horizontal and vertical array response vectors.
   

   \subsection{Signal Model}  
   The transmitted signal from the BS is  
  \begin{align}
 \xb =  \sum_{i=1}^K \wb_i s_i,
 \label{eq1}
  \end{align}
 where $s_i\sim \mathcal{CN}(0,1)$ and $\wb_i  \in \mathbb{C}^{M \times 1}$ are the intended signal for the $i$-th user and its associated BS beamforming vector, respectively.

	The received signal at the $k$-th legitimate user can be written as
\begin{align}
&y_{k} = \left(\sum_{l=1} ^ L {(\hb_{l,k}^{\text{IRS-u}})}^{H} \Thetab_l \Fb_l + {(\hb_{k}^{\text{BS-u}})}^H\right) \xb + n_{k} = \nonumber \\ &\left(\sum_{l} ^ L {(\hb_{l,k}^{\text{IRS-u}})} ^H \Thetab_l \Fb_l + {(\hb_{k}^{\text{BS-u}})}^H\right) \left(\wb_k s_k + \sum_{i \neq k}  \wb_i s_i \right) + n_{k}
\label{eq2}
\end{align}
where $\Thetab_l= \mathrm{diag} \{e^{j\theta_{l,1}}\:, ...\: ,\: e^{j\theta_{l,N}}\}  \in \mathbb{C}^{N_l \times N_l}$ is the passive beamforming matrix of the $l$-th IRS with $\theta_{l,n}$ as its phase shift value. To simplify the notation, in the rest of the paper, we will replace $e^{j\theta_{l,n}}$ with $\alpha_{l,n}$. Therefore, by defining $\boldsymbol{\alpha}_l = [\alpha_{l,1} ... ,\alpha_{l,N}]^T  \in \mathbb{C}^{N_l \times 1}$, we obtain
 $\Thetab_l= \mathrm{diag}  (\boldsymbol{\alpha}_l)$. $n_{k}\sim \mathcal{CN}(0,\sigma_k^2)$ is the additive white Gaussian noise at the $k$-th user receiver. 

Similar to \eqref{eq2}, the received signal at the eavesdropper is obtained as
\begin{align}
&y_{e} = \left(\sum_{l=1} ^ L {(\gb_{l}^{\text{IRS-e}})}^H \Thetab_l \Fb_l + {(\gb^{\text{BS-e}})}^H\right) \xb + n_{e}=\nonumber\\& \left(\sum_{l=1} ^ L {(\gb_{l}^{\text{IRS-e}})} ^H \Thetab_l \Fb_l + {(\gb^{\text{BS-e}})}^H\right) \left(\sum_{i=1}^K \wb_i s_i\right) + n_{e}
\label{eq3}
\end{align}
where $n_{e}\sim \mathcal{CN}(0,\sigma^2_e)$ is the additive white Gaussian noise at the eavesdropper receiver.



According to \eqref{eq2} and \eqref{eq3}, signal-to-noise-plus-interference ratio (SINR) for the $k$-the legitimate user ($\Gamma_{k}$) and eavesdropper ($\Gamma_{e}$) are obtained as follows
  
\begin{align}
&\Gamma_{k}= \frac{\bigg|\left( \displaystyle\sum_{l=1} ^ L  {(\hb_{l,k}^{\text{IRS-u}})} ^H \Thetab_l \Fb_l + {(\hb_{k}^{\text{BS-u}})}^H\right) \wb_k\bigg|^2}  {\displaystyle\sum_{i=1, i \neq k}^K\bigg| \left(\displaystyle\sum_{l=1} ^ L {(\hb_{l,k}^{\text{IRS-u}})} ^H \Thetab_l \Fb_l + {(\hb_{k}^{\text{BS-u}})}^H\right) \wb_i\bigg|^2 + \sigma_k^2},
\label{eq4}
\end{align}
\begin{align}
&\Gamma_{e}= \frac{\bigg|\left(\displaystyle\sum_{l=1} ^ L  {(\gb_{l}^{\text{IRS-e}})}^H \Thetab_l \Fb_l + {(\gb^{\text{BS-e}})}^H\right) \wb_k\bigg|^2}  {\displaystyle\sum_{i=1, i \neq k} ^K \bigg| \left(\displaystyle\sum_{l=1} ^ L  {(\gb_{l}^{\text{IRS-e}})}^H \Thetab_l \Fb_l + {(\gb^{\text{BS-e}})}^H\right) \wb_i\bigg|^2 + \sigma_e^2}.
\label{eq5}
\end{align} 
 To obtain these equations, we leverage the fact that all channels between two different network elements are independent. 
 
Using \eqref{eq4} and \eqref{eq5}, the secrecy rate for the $k$-th user can be obtained as 
\begin{align}
&R_{s,k} = \bigg[\log_2 (1+ \Gamma_{k}) - \log_2 (1+\Gamma_{e})\bigg]^+ = 
\nonumber \\ &\Bigg[\log_2 \bigg(1+\frac{\bigg|\big(\displaystyle\sum_{l=1} ^ L  {(\hb_{l,k}^{\text{IRS-u}})}^H \Thetab_l \Fb_l + {(\hb_{k}^{\text{BS-u}})}^H\big) \wb_k\bigg|^2}  {\displaystyle\sum_{i=1, i \neq k}^K \bigg|\big(\displaystyle\sum_{l=1} ^ L  {(\hb_{l,k}^{\text{IRS-u}})} ^H \Thetab_l \Fb_l + {(\hb_{k}^{\text{BS-u}})}^H\big) \wb_i\bigg|^2 + \sigma_{k}^2}\bigg)- \nonumber \\&
\log_2 \bigg(1+\frac{\bigg|\big(\displaystyle\sum_{l=1} ^ L {(\gb_{l}^{\text{IRS-e}})} ^H \Thetab_l \Fb_l + {(\gb^{\text{BS-e}})}^H\big) \wb_k\bigg|\bigg|^2}  {\displaystyle\sum_{i=1, i \neq k} ^K \bigg| \big(\displaystyle\sum_{l=1} ^ L  {(\gb_{l}^{\text{IRS-e}})} ^H \Thetab_l \Fb_l + {(\gb^{\text{BS-e}})}^H\big) \wb_i\bigg|^2 + \sigma_e^2}\bigg)\Bigg]^+,
\label{eq6}
\end{align}
where $[z]^+=\text{max}(z,0)$, meaning that only the positive values of secrecy rate are acceptable.


\section{Problem Formulation}
\label{sec3}

	As secrecy rate mainly focuses on improving legitimate users' rate while deteriorating eavesdropper's rate, we choose it as our objective function of the optimization problem. In practical multi-user systems, the eavesdropper can only be close to one of the users. In such a case, the other users have higher secrecy rate than the user close to the eavesdropper. Therefore, utilizing the sum of users' secrecy rate as the objective function of the optimization problem leads to the secrecy rate of the user close to the eavesdropper be ignored.
	In other words, in order to increase the sum of the secrecy rates, the network resources are allocated to users who are farther away from the eavesdropper, making the security of the network questionable. To cope with this issue, weighted secrecy rate optimization or max-min fairness criterion can be employed. Determining weighting factor in the former approach can be a difficult and challenging task in practice. Therefore, in this paper, we choose the max-min fairness in our problem formulation by jointly optimizing active and passive beamforming, which will be discussed in the following.
	
	The secrecy rate optimization problem can be expressed as     
	\begin{subequations}
	\begin{align}
		 \qquad \mathcal{P}:\: \: &\max_{\wb_1,...,\wb_K,\boldsymbol{\alpha}_1,...,\boldsymbol{\alpha}_L} \: \min_{k} \: \:  R_{s,k}, \\ \label{eq7b}&
	\qquad\text{s.t.} \: \: \:  \sum_{i=1}^K||\wb_i||_2^2\le P_{\text{BS}}, \\ \label{eq7c}& \qquad
	|\alpha_{l,n}|= 1 , \: \:   \:  l \in \{1,...,L\}, \: n \in \{1,...,N_l\}. \\ &\qquad \label{eq7d}
	\angle{\alpha_{l,n}} \in \mathcal{Z}	= \{0, \Delta \theta_l\, ... , (N_l-1) \Delta \theta_l \}, \: \:   \: \nonumber \\ &\qquad \qquad \qquad \quad \quad l \in \{1,...,L\}, \: n \in \{1,...,N_l\}.
	    	\end{align}  
	    \end{subequations}
   The objective function of the optimization problem is to maximize the minimum secrecy rate among users. The first constraint \eqref{eq7b} limits the maximum transmission power of the BS to be $ P_{\text{BS}}$. The second constraint \eqref{eq7c} states the unitary gain of IRS phase shifts which reflects that the IRS elements are passive without power amplification ability. The final constraint overcomes the hands-on difficulties regarding the continuous-phase shift implementation in IRSs. In other words, IRSs phase shifts are considered being able to get some discrete values according to set $\mathcal{Z}$. Here, $\Delta \theta_l$ is the phase resolution of the $l$-th IRS which equals to $ \Delta \theta_l = \frac {2\pi} {N_l}$.\\ 
 Based on the problem formulation, $\mathcal{P}$ is not only a mixed integer-continuous optimization problem, but also includes highly coupled optimization variables, due to the discrete values of $\boldsymbol{\alpha}$, continuous values of beamfoming vectors, and high coupling among them in objective function. Reaching an optimal solution of such a strong non-convex problem is ridiculously complicated. Although, the values of $\boldsymbol{\alpha}$ are in a finite set and can be found by an exponential complexity search, the coupling relation among $\boldsymbol{\alpha}$, and $\wb$ makes this search inapplicable. Thus, apart from the complexity, there is no-straightforward method to find an optimal solution. One common method to make this kind of problems tractable is to break the main problem into some sub-problems according to BCD approach and then solve the main problem in an iterative-manner. SDP in combination with rank-one relaxation and Gaussian randomization is one of the major proposed methods for solving each sub-problems in the literature \cite{jadid6, AH10, jadid7, jadid8, jadid1, jadid2}. One of the main problems of such approaches is the convergence and also achieving an approximated solution. In particular, because of using Gaussian randomization and/or EVD method to refine a rank-one method, it is required a large number of iterations for convergence, especially in our multi-IRS proposed system with a large number of phase shifts and BS antennas. Furthermore, there is no-strong guarantee without any tricks for BCD convergence in such a case that all the results are approximated by the Gaussian randomization and/or EVD method. Considering these facts, in the following, we propose our algorithm to cope with these problems with an affordable computational complexity.
\section{Proposed Algorithm}
\label{sec4}
  Taking a closer look at the non-convex problem $\mathcal{P}$, it is found that there are two disjoint subsets of optimization variables, namely $\{\wb_1,...,\wb_K\}$ and $\{\boldsymbol{\alpha}_1,...,\boldsymbol{\alpha}_L\}$ with their own independent constraints. According to the aforementioned discussion, one way to deal with problem $\mathcal{P}$ is the BCD approach. However, dealing with discrete variables as well as absolute value constraint \eqref{eq7c} is very challenging indeed.
   On way to cope the former problem is continuous relaxation with an affordable computational complexity. As in IRSs, a large number of reflecting elements are widely used, continuous relaxation can lead to a low quantization error. Thus, in this paper, we use continuous relaxation to deal with discrete variables which replaces \eqref{eq7d} by the following convex constraint
  
  \begin{align}
  \label{relax1}
  & 0\leq\angle{\alpha_{l,n}} \leq2\pi,  \: l \in \{1,...,L\}, \: n \in \{1,...,N_l\}.
  \end{align} 
  
  Actually, we relax the $\boldsymbol{\alpha}$-relevant discrete phase constraint into convex alternatives and change the problem $\mathcal{P}$ from the mixed integer-continuous into a continuous one.\\ To resolve the mentioned issue regarding \eqref{eq7c}, convex-concave decomposition approach \cite{convex_concave_decom} leads to
  	\begin{subequations}
  	\begin{align}
  	&|\alpha_{l,n}|^2 \leq 1 , \: \:   \:  \forall l,n, \\ \label{alpha_concave1}
  	&|\alpha_{l,n}|^2 \geq 1-\epsilon_{l,n} , \: \:   \:  \forall l,n, \\\label{relax_penalty1}
  	& \epsilon_{l,n} = 0, \: \:  \:  \forall l,n, 
  	\end{align}
  \end{subequations}
  where $\epsilon_{l,n}$ is a slack variable models the equality constraint violation. In this situation, as the formed constraints are too strict, applying the concept of penalty function is highly suggested to broaden the feasible set \cite{penalty1, penalty2}. Employing this approach, approximated formulation of $\mathcal{P}$ with a larger feasible set and also the constraint violation penalty term is 
\begin{subequations}
	\begin{align}
	\quad \mathcal{P}':\: \: &\max_{\wb_1,...,\wb_K,\boldsymbol{\alpha}_1,...,\boldsymbol{\alpha}_L, \epsilon_{l,n}} \: \min_{k} \: \:  R_{s,k} + P_e\sum_{l,n}\epsilon_{l,n}, \\ &
	\quad\text{s.t.} \: \: \:  \eqref{eq7b},\eqref{relax1} \\ 
	&\quad|\alpha_{l,n}|^2 \leq 1 , \: \:   \:  \forall l,n, \\ \label{alpha_concave2}
	&\quad|\alpha_{l,n}|^2 \geq 1-\epsilon_{l,n} , \: \:   \:  \forall l,n, \\\label{relax_penalty2}
	&\quad 0\leq\epsilon_{l,n}\leq 1, \: \:  \:  \forall l,n, 
	\end{align}  
\end{subequations}
Here, according to the concept of penalty method, \eqref{relax_penalty1} is relaxed to \eqref{relax_penalty2} and penalty function $\sum_{l,n}\epsilon_{l,n}$ with penalty coefficient $P_e<0$ is added to objective function in order to penalize the constraint violations. Due to $0\leq\epsilon_{l,n}\leq 1, \forall l,n $, the violations less than 1 are absolutely important, thus $l_1$-norm is considered as the penalty function. Unfortunately, penalty function method as well as continuous relaxation approximate the original optimization problem $\mathcal{P}$ by increasing the range of feasible set in $\mathcal{P}'$. As a result, the values of optimization variables are less likely to be feasible in  $\mathcal{P}$. The feasible solutions are derived based on the proposed mapping algorithm in sub-section \ref{mapping}. However, as simulation results will show, this concept still leads to high-quality results compared with proposed benchmarks.\\
  
 In the next step, using BCD approach, the optimization problem $\mathcal{P}'$ is at first decomposed into two disjoint sub-problems, which are then solved in an iterative-manner until convergence. In the following, these two sub-problems, the proposed approaches and the overall BCD-based algorithm to solve the relaxed problem $\mathcal{P}'$ will be presented in detail. Furthermore, the mapping algorithm to navigate to feasible points of $\boldsymbol{\alpha}$ will also be explicated.  
	\subsection{Active Beamforming Optimization}
Assuming that $\Thetab$ and $\epsilon_{l,n}, \forall l,n$ in $\mathcal{P}'$ are fixed, the sub-problem 1 is obtained as
		\begin{subequations}
		\label{active_main}
		\begin{align}
		&	\mathcal{P}1: \max_{\wb_1,...,\wb_K} \: \min_{k} \: \: R_{s,k} + \text{Ct}\\& 
		\text{s.t.} \: \: \: \sum_{i=1}^K||\wb_i||_2^2\le P_{\text{BS}}.
		\end{align}
	\end{subequations}
	where
	\begin{align}
	& \qquad \qquad \qquad \qquad \qquad R_{s,k} =\nonumber\\& \log_2\Bigg[\frac{\bigg(\displaystyle\sum_{i=1}^K\bigg|(\displaystyle\sum_{l=1} ^ L  {(\hb_{l,k}^{\text{IRS-u}})} ^H \Thetab_l \Fb_l + {(\hb_{k}^{\text{BS-u}})}^H) \wb_i\bigg|^2 + \sigma_k^2\bigg)}  {\bigg(\displaystyle\sum_{i \neq k} \bigg|(\displaystyle\sum_{l=1} ^ L  {(\hb_{l,k}^{\text{IRS-u}})} ^H \Thetab_l \Fb_l + {(\hb_{k}^{\text{BS-u}})}^H) \wb_i\bigg|^2 + \sigma_k^2\bigg)}\Bigg] -\nonumber \\&\quad
	\log_2 \Bigg[\frac  {\bigg(\displaystyle\sum_{i = 1}^K \bigg|(\displaystyle\sum_{l=1} ^ L  {(\gb_{l}^{\text{IRS-e}})} ^H \Thetab_l \Fb_l + {(\gb^{\text{BS-e}})}^H) \wb_i\bigg|^2 + \sigma_e^2\bigg)} {\bigg(\displaystyle\sum_{i \neq k}\bigg|(\displaystyle\sum_{l=1} ^ L  {(\gb_{l}^{\text{IRS-e}})} ^H \Thetab_l \Fb_l + {(\gb^{\text{BS-e}})}^H) \wb_i\bigg|^2+\sigma_e^2\bigg)} \Bigg].
	\end{align}
 Here in \eqref{active_main}, $\text{Ct}=P_e\sum_{l,n}\epsilon_{l,n}$ is a constant value in this sub-problem which does not influence the optimum values of active beamforming vectors. Thus, for the reset of the active beamforming optimization, it will be ignored. Since the BS is capable to amplify the transmitted signal or in other words optimize both amplitude and phase of the beamforming coefficients, this problem is named as active beamforming. The objective function in $\mathcal{P}1$ is still non-convex, thus we reformulate it according to the following proposition. 

\noindent{\bf Proposition 1.} \emph{The optimization problem $\mathcal{P}1$ can be reformulated as the following optimization problem.}
\begin{subequations}
	\label{eq_reform1}
	\begin{align}
		& \max_{\wb_1,...,\wb_K,p_k,q_k,r,s_k,a}  \:\min_{k} \: \: -r+a \\ \label{eqt1}
		& \text{s.t.} \: \: \: \sum_{i=1}^K||\wb_i||_2^2\le P_{\text{BS}},\\ \label{eqt2}
		& \quad\quad p_k-q_k+s_k \ge a , \: \: \forall k \\ \label{eqt3p}
		&\quad\quad \sum_{i=1}^K |\boldsymbol{\xi}_k^T \wb_i | ^2+\sigma_r^2\ge 2^{p_k}, \: \: \forall k \\ \label{eqt3}
		&\quad\quad \sum_{i \neq k} | \boldsymbol{\xi}_k^T \wb_i | ^2+\sigma_r^2\le 2^{q_k} , \: \: \forall k \\ \label{eqt4}
		&\quad\quad \sum_{i=1}^K |\boldsymbol{\rho}^T \wb_i|^2+\sigma_e^2\le 2^{r}, \: \: \forall k \\ \label{eqt5}
		&\quad\quad \sum_{i \neq k} |\boldsymbol{\rho}^T \wb_i| ^2+\sigma_e^2\ge 2^{s_k}. \: \: \forall k 	
	\end{align}
\end{subequations} 
where $ \boldsymbol{\xi}_k^T =  \big(\sum_{l=1} ^ L  {(\hb_{l,k}^{\text{IRS-u}})} ^H \Thetab_l \Fb_l + {(\hb_{k}^{\text{BS-u}})}^H\big) \in \mathbb{C}^{1 \times M}$ , and $\boldsymbol{\rho}^T =  \big(\sum_{l=1} ^ L {(\gb_{l}^{\text{IRS-e}})} ^H \Thetab_l \Fb_l + {(\gb^{\text{BS-e}})}^H\big) \in \mathbb{C}^{1 \times M}$. 
\begin{IEEEproof} 
Defining $ \boldsymbol{\xi}_k^T =  \big(\sum_{l=1} ^ L  {(\hb_{l,k}^{\text{IRS-u}})} ^H \Thetab_l \Fb_l + {(\hb_{k}^{\text{BS-u}})}^H\big)$, and $\boldsymbol{\rho}^T =  \big(\sum_{l=1} ^ L {(\gb_{l}^{\text{IRS-e}})} ^H \Thetab_l \Fb_l + {(\gb^{\text{BS-e}})}^H\big)$, the following simplifications can be carried out,
\begin{align}
 \bigg|\big(\sum_{l=1} ^ L  {(\hb_{l,k}^{\text{IRS-u}})} ^H \Thetab_l \Fb_l + {(\hb_{k}^{\text{BS-u}})}^H\big) \wb_i\bigg|^2 =  | \boldsymbol{\xi}_k^T \wb_i |^2, \label{eq10}\\
 \bigg|\big( \sum_{l=1} ^ L  {(\gb_{l}^{\text{IRS-e}})} ^H \Thetab_l \Fb_l + {(\gb^{\text{BS-e}})}^H\big) \wb_i\bigg|^2 = 
 |\boldsymbol{\rho}^T \wb_i|^2.
 \label{eq11}
\end{align}
\\
Considering \eqref{eq10} and \eqref{eq11}, $\mathcal{P}1$ is reformulated as
 \begin{subequations}
	\begin{align}
	&	\mathcal{P}1: \max_{\wb_1,...,\wb_K} \: \min_{k} \: \: 
	\log_2\Bigg(\frac{\displaystyle\sum_{i=1}^K| \boldsymbol{\xi}_k^T \wb_i | ^2 + \sigma_k^2}  {\displaystyle\sum_{i \neq k} |\boldsymbol{\xi}_k^T \wb_i|^2 + \sigma_k^2}\Bigg) -\nonumber \\&
	\qquad\qquad\qquad\qquad\qquad\qquad\log_2 \Bigg(\frac  {\displaystyle\sum_{ i = 1}^K |\boldsymbol{\rho}^T \wb_i |^2 + \sigma_e^2} {\displaystyle\sum_{i \neq k}|\boldsymbol{\rho}^T \wb_i | ^2+\sigma_e^2} \Bigg),\\ 
	&	\qquad\qquad\qquad\qquad\text{s.t.} \: \: \: \sum_{i=1}^K||\wb_i||_2^2\le P_{\text{BS}}.
	\end{align}
	\label{eq12}
\end{subequations}
After that, by introducing auxiliary variables $ 2^{p_k}=\sum_{i=1}^K|\boldsymbol{\xi}_k^T \wb_i| ^2 + \sigma_k^2$,  $2^{q_k}=\sum_{i \neq k} |\boldsymbol{\xi}_k^T \wb_i|^2 + \sigma_k^2$,\\ $2^{r}=\sum_{ i = 1}^K |\boldsymbol{\rho}^T \wb_i|^2 + \sigma_e^2$ and $2^{s_k}=\sum_{i \neq k}|\boldsymbol{\rho}^T \wb_i|^2+\sigma_e^2 $ and substituting into \eqref{eq12}, we arrive at the following equivalent optimization problem.
 \begin{subequations}
 	\label{eq12_v1}
 	\begin{align}
 		& \max_{\wb_1,...,\wb_K,p_k,q_k,r,s_k}  \:\min_{k} \: \: p_k -q_k -r+s_k \\ \label{eq9b}
 		& \text{s.t.} \: \: \: \sum_{i=1}^K||\wb_i||_2^2\le P_{\text{BS}},\\ \label{eq9c}
 		&\quad\quad \sum_{i=1}^K |\boldsymbol{\xi}_k^T \wb_i|^2+\sigma_k^2\ge 2^{p_k}, \: \: \forall k \\ \label{eq9f}
 		&\quad\quad \sum_{i \neq k} |\boldsymbol{\xi}_k^T \wb_i| ^2+\sigma_k^2\le 2^{q_k} , \: \: \forall k \\ \label{eq9g}
 		&\quad\quad \sum_{i=1}^K |\boldsymbol{\rho}^T \wb_i|^2+\sigma_e^2\le 2^{r}, \: \: \forall k \\ \label{eq9h}
 		&\quad\quad \sum_{i \neq k} |\boldsymbol{\rho}^T\wb_i| ^2+\sigma_e^2\ge 2^{s_k}. \: \: \forall k  
 	\end{align}
 \end{subequations}
Again, by introducing another auxiliary variable $a= \underset{k}{\text{min}}(p_k - q_k + s_k)$, \eqref{eqt2} constitutes, the optimization problem \eqref{eq12_v1} reformulated as \eqref{eq_reform1} and the proof is completed.   
\end{IEEEproof}                                 
The objective function in \eqref{eq_reform1} is an affine one, thus it is concave. The constraints \eqref{eqt1} and \eqref{eqt2} also form convex sets. As it has been expected, there are some non-convex constraints involving \eqref{eqt3p}, \eqref{eqt3}, \eqref{eqt4} and \eqref{eqt5}. With respect to these constraints, it is found that both sides of the inequalities of all constraints are in convex forms, therefore SCA can be employed to tackle their non-convexity. But before applying SCA approach, it is required to introduce below auxiliary lemma which will be helpful in the process of applying SCA.\\
\noindent{\bf Lemma 1}.
For a general quadratic function in form of $y=|\ab^T\xb+b|^2$ where $\ab=\ab^{\Re}+j\ab^{\Im}  \in \mathbb{C}^{N \times 1}$, $\xb=\xb^{\Re}+j\xb^{\Im} \in \mathbb{C}^{N \times 1}$, the partial Gradients w.r.t. the real and imaginary parts of $\xb$ (i.e. $\xb^{\Re}$ and $\xb^{\Im}$), are acquired according to the following equation
 \begin{align}
 \label{lemma1_main_eq}
  \begin{bmatrix} 
   \dot{\yb}^{\Re}\\
   \dot{\yb}^{\Im}\
 \end{bmatrix}
 =
 \begin{bmatrix} 
 \ab^{\Re} & \ab^{\Im} \\
 -\ab^{\Im} & \ab^{\Re} 
 \end{bmatrix}
 \begin{bmatrix} 
\Re\big\{\ab^T\xb+b\}  \\
\Im\big\{\ab^T\xb+b\}  
\end{bmatrix} 			
 \end{align} 
 where $ \dot{\yb}^{\Re}$ and $\dot{\yb}^{\Im}$ are partial Gradients of $\mathbf{y}$ w.r.t. $\xb^{\Re}$ and $\xb^{\Im}$, respectively. Here, block matrix notation is also used to demonstrate the partial Gradients. Furthermore, $\Re\big\{.\big\}$ and  $\Im\big\{.\big\}$ are operators returning the real and imaginary part of complex vectors and matrices.
 \begin{IEEEproof}
See Appendix \ref{Apen1}.	
 \end{IEEEproof}  
 Considering Lemma 1, subsequent Lemma articulates employing SCA to create approximated convex constraints. 
 
\noindent{\bf Lemma 2}.
 \eqref{eqt3p},\eqref{eqt3}, \eqref{eqt4}, and \eqref{eqt5}, can be approximated as the following convex constraints by using the SCA approach.
\begin{subequations}
	\begin{align}
	\label{s1sca1}
	& \sum_{i=1}^K |\boldsymbol{\xi}_k^T \bar{\wb}_i|^2+ 2 \sum_{i=1}^K \big[({\boldsymbol{\gamma}}_{i,k}^\Re)^T(\wb_i^{\Re} - \bar{\wb}_i^{\Re})+\nonumber\\&\qquad\qquad\qquad\qquad \quad({\boldsymbol{\gamma}}_{i,k}^\Im)^T(\wb_i^{\Im} - \bar{\wb}_i^{\Im})\big] +\sigma_k^2\ge 2^{p_k}, \: \: \forall k \\ \label{s1sca2}
	& \sum_{i \neq k} |\boldsymbol{\xi}_k^T\wb_i|^2+\sigma_k^2\le 2^{\bar{q}_k} + \ln(2)2^{\bar{q}_k} (q_k - \bar{q}_k) , \: \: \forall k \\ 
	\label{s1sca3}
	&\sum_{i=1}^K |\boldsymbol{\rho}^T \wb_i|^2+\sigma_e^2\le 2^{\bar{r}} + \ln(2)2^{\bar{r}} (r - \bar{r}), \: \: \forall k \\ 
	\label{s1sca4}
	&\sum_{i \neq k} |\boldsymbol{\rho}^T \bar{\wb}_i|^2+2\sum_{i \neq k} \big[({\boldsymbol{\kappa}}_{i}^\Re)^T(\wb_i^{\Re} - \bar{\wb}_i^{\Re})+\nonumber\\&\qquad\qquad\qquad\qquad \quad({\boldsymbol{\kappa}}_{i}^\Im)^T(\wb_i^{\Im} - \bar{\wb}_i^{\Im})\big]+\sigma_e^2\ge 2^{s_k}. \: \: \forall k,   
	\end{align}
\end{subequations}
where $\wb_i = \wb_i^{\Re} + j\wb_i^{\Im}$, $\boldsymbol{\xi}_k = \boldsymbol{\xi}_k^{\Re} + j\boldsymbol{\xi}_k^{\Im}$, $\boldsymbol{\rho} = \boldsymbol{\rho}^{\Re} + j\boldsymbol{\rho}^{\Im}$. $\bar{\wb}_i^{\Re}$, $\bar{\wb}_i^{\Im}$, $\bar{q}_k$, and $\bar{r}$ are also the points where the first-order Taylor approximations are made. Furthermore, ${\boldsymbol{\gamma}}_{i,k}^\Re, {\boldsymbol{\gamma}}_{i,k}^\Im, {\boldsymbol{\kappa}}_{i}^\Re,$ and ${\boldsymbol{\kappa}}_{i}^\Im \in \mathbb{C}^{M\times1}$ are denoted as
\begin{subequations}
\begin{align}
&\begin{bmatrix} 
	{\boldsymbol{\gamma}}_{i,k}^\Re\\
	{\boldsymbol{\gamma}}_{i,k}^\Im
\end{bmatrix}
=
\begin{bmatrix} 
	\boldsymbol{\xi}_k^{\Re} & \boldsymbol{\xi}_k^{\Im} \\
	-\boldsymbol{\xi}_k^{\Im} & \boldsymbol{\xi}_k^{\Re} 
\end{bmatrix}
\begin{bmatrix} 
	\Re\big\{\boldsymbol{\xi}_k^T\bar{\wb}_i\big\}  \\
	\Im\big\{\boldsymbol{\xi}_k^T\bar{\wb}_i\big\}  
\end{bmatrix} \\
&\begin{bmatrix} 
{\boldsymbol{\kappa}}_{i}^\Re\\
{\boldsymbol{\kappa}}_{i}^\Im
\end{bmatrix}
=
\begin{bmatrix} 
\boldsymbol{\rho}^{\Re} & \boldsymbol{\rho}^{\Im} \\
-\boldsymbol{\rho}^{\Im} & \boldsymbol{\rho}^{\Re} 
\end{bmatrix}
\begin{bmatrix} 
\Re\big\{\boldsymbol{\rho}^T\bar{\wb}_i\big\}  \\
\Im\big\{\boldsymbol{\rho}^T\bar{\wb}_i\big\}  
\end{bmatrix}	
\end{align}
\end{subequations}
\begin{IEEEproof} 
As for convex functions, the existence of first-order Taylor expansions creates themselves lower bounds, Taylor expansions of $2^{q_k}$ and $2^{r}$ around $\bar{q_k}$ and $\bar{r}$ replace \eqref{eqt3} and \eqref{eqt4} with convex constraints \eqref{s1sca2} and \eqref{s1sca3}, respectively. Contrary to $2^{q_k}$ and $2^{r}$, $\sum_{i=1}^K | \boldsymbol{\xi}_k^T \wb_i |^2$ and $\sum_{i \neq k}|\boldsymbol{\rho}^T\wb_i |^2$ are not differentiable w.r.t. the complex vector $\wb_i$. Thus, by defining $\wb_i = \wb_i^{\Re} + j\wb_i^{\Im}$ and using the first order Taylor expansions of equivalent inequalities based on real and imaginary parts of $\wb_i$, according to Lemma 1, \eqref{eqt3p} and \eqref{eqt5} can be replaced by \eqref{s1sca1} and \eqref{s1sca4}, respectively.
\end{IEEEproof}	 
According to Lemma 2 in conjunction with the SCA approach, $\mathcal{P}1$ can be approximated with a smaller feasible set than the original problem as the following iterative convex problem
  \begin{subequations}
 	\begin{align}
 	& \mathcal{P}1.1 \max_{\wb_1,...,\wb_K,p_k,q_k,r,s_k,a}  \: \: -r+a \\ \label{pow_const}
 & \text{s.t.} \: \: \: \sum_{i=1}^K||\wb_i||_2^2\le P_{\text{BS}},\\ 
 & p_k-q_k+s_k \ge a , \: \: \forall k \\
 & \sum_{i=1}^K|\boldsymbol{\xi}_k^T \bar{\wb}_i^{(t_1)}| ^2+ 2 \sum_{i=1}^K \big[({\boldsymbol{\gamma}}_{i,k}^{\Re(t_1)})^T(\wb_i^{\Re} - \bar{\wb}_i^{\Re{(t_1)}})+\nonumber\\&\qquad\qquad\qquad({\boldsymbol{\gamma}}_{i,k}^{\Im(t_1)})^T(\wb_i^{\Im} - \bar{\wb}_i^{\Im(t_1)})\big] +\sigma_k^2\ge 2^{p_k}, \: \: \forall k \\ 
 & \sum_{i \neq k} |\boldsymbol{\xi}_k^T \wb_i |^2+\sigma_k^2\le 2^{\bar{q}_k^{(t_1)}} + \ln(2)2^{\bar{q}_k^{(t_1)}} (q_k - \bar{q}_k^{(t_1)}) , \: \: \forall k \\ 
 &\sum_{i=1}^K |\boldsymbol{\rho}^T \wb_i |^2+\sigma_e^2\le 2^{\bar{r}^{(t_1)}} + \ln(2)2^{\bar{r}^{(t_1)}} (r - \bar{r}^{(t_1)}), \: \: \forall k \\ 
 &\sum_{i \neq k} |\boldsymbol{\rho}^T \bar{\wb}_i^{(t_1)}| ^2+2\sum_{i \neq k} \big[({\boldsymbol{\kappa}}_{i}^{\Re{(t_1)}})^T(\wb_i^{\Re} - \bar{\wb}_i^{\Re{(t_1)}})+\nonumber\\&\qquad\qquad\qquad({\boldsymbol{\kappa}}_{i}^{\Im^{(t_1)}})^T(\wb_i^{\Im} - \bar{\wb}_i^{\Im{(t_1)}})\big]+\sigma_e^2\ge 2^{s_k}, \: \: \forall k,
 	\end{align}
 \end{subequations}
 where the optimization problem is solved at the $t_1$-th iteration. The proposed algorithm for solving $\mathcal{P}1$ is summarized in \textbf{Algorithm 1}.

 	\begin{algorithm}[H]
 		\caption{Proposed algorithm for solving $\mathcal{P}1$.}
 		\begin{algorithmic}[1]
 			\renewcommand{\algorithmicensure}{\textbf{Outputs:}}
 			\renewcommand{\algorithmicrequire}{\textbf{Initialization:}} 				
 			\ENSURE  $ {\wb_1},...,{\wb_K}. $		
 			\REQUIRE  			  ${t_{1,\text{max}}},$${\varepsilon_1},$${\boldsymbol{\alpha}_1},...,{\boldsymbol{\alpha}_L}$, $\epsilon_{l,n}, \forall l,n$, $t_1= 0,$$ \bar{q}_k^{(0)},$ $\forall k , \bar{r}^{(0)}, \bar{\wb}_1^{(0)},...,\bar{\wb}_K^{(0)}, $ such that the sub-problem is feasible. 
 			\STATE \textbf{Repeat}
 			\STATE Solve the problem $\mathcal{P}1.1$ to obtain the optimal solutions for $\wb_1^{(t_1)},...,\wb_K^{(t_1)}$.
 			\STATE Set $ t_1=t_1+1 $.
 			\STATE Update $\bar{\wb}_1^{(t_1+1)}=\wb_1^{(t_1)}, ..., \bar{\wb}_K^{(t_1+1)}=\wb_K^{(t_1)}$, $\bar{q}_k^{(t_1+1)}=q_k^{(t_1)}, \forall k , \: \bar{r}^{(t_1+1)}=r^{(t_1)}$, 
 			\STATE \textbf{Until} $ \left| {\frac{{\underset{k}{\text{min}} {R'}_{s,k}^{(t_1)} -\underset{k}{\text{min}} {R'}_{s,k}^{(t_1-1)}}}{{\underset{k}{\text{min}}{R'}_{s,k}^{(t_1-1)}}}} \right| \le {\varepsilon_1 } $ or $t_1 = t_{1,\text{max}}$,
 		\end{algorithmic}
 	\end{algorithm}

In \textbf{Algorithm 1}, ${t_{1,max}}, {\varepsilon_1},{\boldsymbol{\alpha}_1},...,{\boldsymbol{\alpha}_L}$ are the maximum number of iterations for the algorithm, maximum relative error of objective function between two consecutive outputs, and $L$ IRS phase profiles, respectively. The objective function of this sub-problem at the $t_1$-th iteration is defined as  $R_{s,k}^{'(t_1)} = R_{s,k}^{(t_1)} +  \text{Ct}$. BS beamforming vectors are the outputs of this algorithm. In the initialization step, we randomly and independently generate the vectors ${\bar{\wb}_1}^{(0)},...,{\bar{\wb}_K}^{(0)}$ according to $\mathcal{CN}(0,1)$ and then normalize their values such that \eqref{pow_const} becomes an active constraint. Considering these initial values, $\bar{q_k}^{(0)}, \forall k$ and $\bar{r}^{(0)}$ are calculated according to the following equations
  \begin{subequations}
 \begin{align}
 & {\bar{q}_k^{(0)}}=\log_2\bigg(\sum_{i=1}^K|\boldsymbol{\xi}_k^T \bar{\wb}_i^{(0)}|^2 + \sigma_k^2\bigg),  \forall k\\
 &{\bar{r}^{(0)}}=\log_2\bigg(\sum_{i = 1}^K |\boldsymbol{\rho}^T \bar{\wb}_i^{(0)}|^2 + \sigma_e^2\bigg),
 \end{align}
 \end{subequations}
where these values are achieved according to the definition of auxiliary variables in Proposition 1.
 For the subsequent iterations (except the first one), the initial values of $\bar{\wb}_1,...,\bar{\wb}_K$, $\bar{q}_k, \forall k$, and $\bar{r}$ are attained from the previous iteration. For example at the $(t_1+1)$-th iteration, the initial values are obtained as $\bar{\wb}_1^{(t_1+1)}=\wb_1^{(t_1)},...,\bar{\wb}_K^{(t_1+1)}=\wb_K^{(t_1)}$, $\bar{q}_k^{(t_1+1)} = {q}_k^{(t_1)} $ and $\bar{r}^{(t_1+1)} = {r}^{(t_1)}$. It is obvious that these values also belong to the feasible set of the new problem at the $(t_1+1)$-th iteration.
 
\subsection{Passive Beamforming Optimization}
In the second sub-problem, given $\wb_i, \forall i$, the optimization problem $\mathcal{P}'$ can be formulated as
\begin{subequations}
	\label{P2}
	\begin{align}
	\label{eq.objective.P2}
	&	\mathcal{P}2: \max_{\boldsymbol{\alpha},\epsilon_{l,n}, \forall l,n} \: \min_{k} \: \: R_{s,k} + P_e\sum_{l = 1} ^{NL}\epsilon_{l,n},\\ 
	&\label{eq.constraint.P2} 
	\text{s.t.} \: \: \: \:\quad|\alpha_{l,n}|^2 \leq 1 , \: \:   \:  \forall l,n, \\ \label{eq.constraint.P2_1}
	&\qquad\quad|\alpha_{l,n}|^2 \geq 1-\epsilon_{l,n} , \: \:   \:  \forall l,n, \\\label{eq.constraint.P2_2}
	&\qquad\quad 0\leq\epsilon_{l,n}\leq 1, \: \:  \:  \forall l,n, 
	\end{align}
\end{subequations}
where
\begin{align}
& \qquad \qquad \qquad \qquad \qquad R_{s,k} =\nonumber\\& \log_2\Bigg[\frac{\bigg(\displaystyle\sum_{i=1}^K\bigg|(\displaystyle\sum_{l=1} ^ L  {(\hb_{l,k}^{\text{IRS-u}})} ^H \Thetab_l \Fb_l + {(\hb_{k}^{\text{BS-u}})}^H) \wb_i\bigg|^2 + \sigma_k^2\bigg)}  {\bigg(\displaystyle\sum_{i \neq k} \bigg|(\displaystyle\sum_{l=1} ^ L  {(\hb_{l,k}^{\text{IRS-u}})} ^H \Thetab_l \Fb_l + {(\hb_{k}^{\text{BS-u}})}^H) \wb_i\bigg|^2 + \sigma_k^2\bigg)}\Bigg] -\nonumber \\&\quad
\log_2 \Bigg[\frac  {\bigg(\displaystyle\sum_{i = 1}^K \bigg|(\displaystyle\sum_{l=1} ^ L  {(\gb_{l}^{\text{IRS-e}})} ^H \Thetab_l \Fb_l + {(\gb^{\text{BS-e}})}^H) \wb_i\bigg|^2 + \sigma_e^2\bigg)} {\bigg(\displaystyle\sum_{i \neq k}\bigg|(\displaystyle\sum_{l=1} ^ L  {(\gb_{l}^{\text{IRS-e}})} ^H \Thetab_l \Fb_l + {(\gb^{\text{BS-e}})}^H) \wb_i\bigg|^2+\sigma_e^2\bigg)} \Bigg].
\end{align}
Similar to $\mathcal{P}1$, this sub-problem is also non-convex because the objective function is non-concave. But in contrast to the previous sub-problem, there is also a non-convex constraint \eqref{eq.constraint.P2_1}. To deal with it, we first propose an equivalent form of this problem by introducing Proposition 2 below. 
		
\noindent{\bf Proposition 2.} \emph{The optimization problem $\mathcal{P}2$ can be reformulated as the following optimization problem.}	
	\begin{subequations}
		\label{P2_Eq}
		\begin{align}
		\label{P2_obj}
&\max_{\boldsymbol{\alpha},m_k,n_k,u,z_k,b,\epsilon_{l,n}} \: \: (-u+b+P_e\sum_{l,n}\epsilon_{l,n}) \quad\quad\quad\quad \quad\quad \\ 		\label{P2_Eq_c1}
&	\qquad\text{s.t.}  \: \: \: \eqref{eq.constraint.P2}, \eqref{eq.constraint.P2_1}, \eqref{eq.constraint.P2_2} \\ \label{P2_Eq_c2}
&\qquad\: m_k-n_k+z_k \geq b , \: \forall k, \\ \label{P2_Eq_c3}
& \qquad\: \sum_{i = 1} ^K \big| \boldsymbol{\alpha}^T \hb_{k,i}' + \hb_{k,i}''\big|^2 + \sigma^2_k \ge 2^{m_k} , \: \forall k, \\ \label{P2_Eq_c4}
&\qquad\:	\sum_{i \neq k}  \big| \boldsymbol{\alpha}^T \hb_{k,i}'+ \hb_{k,i}''\big|^2 + \sigma^2_k \le 2^{n_k} , \: \forall k, \\ \label{P2_Eq_c5}
&\qquad \:	\sum_{i = 1} ^K \big|\boldsymbol{\alpha}^T  \gb_i' +\gb_i''\big|^2 + \sigma_e^2 \le 2^u, \quad\quad\quad \\\label{P2_Eq_c6}
&\qquad \:	\sum_{i \neq k}  \big| \boldsymbol{\alpha}^T  \gb_i' +\gb_i''\big|^2 + \sigma_e^2 \ge 2^{z_k} , \: \forall k, 
		\end{align}
	\end{subequations}
\emph{where $\hb_{k,i}'= \text{diag}(\hb_{k}^{\text{IRS-u}})^* \Fb \wb_i \in \mathbb{C}^{LN \times 1}, \hb_{k,i}''= {(\hb_{k}^{\text{BS-u}})}^H \wb_i \in \mathbb{C}^{LN \times 1}$, and $\gb_i'=  \text{diag}(\gb^{\text{IRS-e}})^* \Fb \wb_i \in \mathbb{C}^{LN \times 1}$, $\gb_i''= {(\gb^{\text{BS-e}})}^H \wb_i \in \mathbb{C}^{LN \times 1}$. Furthermore,	${(\hb_{k}^{\text{IRS-u}})}^H=\big[{(\hb_{1,k}^{\text{IRS-u}})}^H , ... , {(\hb_{L,k}^{\text{IRS-u}})}^H\big] \in \mathbb{C}^{1 \times LN}$,\\ ${(\gb^{\text{IRS-e}})}^H=\big[{(\gb_{1}^{\text{IRS-e}})}^H , ... , {(\gb_{L}^{\text{IRS-e}})}^H\big] \in \mathbb{C}^{1 \times LN}$, $\Fb=[\Fb_1^T, ... , \Fb_L^T]^T \in \mathbb{C}^{LN \times M}$, and $\boldsymbol{\alpha} = [\boldsymbol{\alpha}_1^T,...,\boldsymbol{\alpha}_L^T]^T \in\mathbb{C}^{LN \times 1}$.}
\begin{IEEEproof} 
	Assuming that all the IRSs have the same number of phase shifts, e.g., $N_l= N, \forall l$, by defining ${(\hb_{k}^{\text{IRS-u}})}^H=\big[{(\hb_{1,k}^{\text{IRS-u}})}^H , ... , {(\hb_{L,k}^{\text{IRS-u}})}^H\big] \in \mathbb{C}^{1 \times LN}$,  ${(\gb^{\text{IRS-e}})}^H=\big[{(\gb_{1}^{\text{IRS-e}})}^H , ... , {(\gb_{L}^{\text{IRS-e}})}^H\big] \in \mathbb{C}^{1 \times LN}$, $	\Fb=[\Fb_1^T, ... , \Fb_L^T]^T \in \mathbb{C}^{LN \times M}$, $\Thetab = \text{diag}(\Thetab_1, ... ,\Thetab_L)\in \mathbb{C}^{LN \times LN}$ and $\boldsymbol{\alpha} = [\boldsymbol{\alpha}_1^T,...,\boldsymbol{\alpha}_L^T] \in\mathbb{C}^{LN \times 1}$, a part of objective function in $\mathcal{P}2$ can be reformulated as
			\begin{align}
		\label{eq.objective.P2.taghir.yafte}
		&\nonumber \qquad \qquad \qquad \qquad \qquad R_{s,k}=\\
		&\log_2\bigg(\frac{\displaystyle\sum_{i=1}^K\bigg|\big(   {(\hb_{k}^{\text{IRS-u}})} ^H \Thetab \Fb + {(\hb_{k}^{\text{BS-u}})}^H\big) \wb_i\bigg|^2 + \sigma^2_k}  {\displaystyle\sum_{i \neq k} \bigg|\big({(\hb_k^{\text{IRS-u}})}^H \Thetab \Fb + {(\hb_{k}^{\text{BS-u}})}^H\big) \wb_i\bigg|^2 + \sigma^2_k} \bigg) - \nonumber \\
		&\qquad\quad\log_2 \bigg( \frac  {\sum_{ i = 1}^K \bigg|( {(\gb^{\text{IRS-e}})} ^H \Thetab \Fb + {(\gb^{\text{BS-e}})}^H) \wb_i\bigg|^2 + \sigma_e^2} {\displaystyle\sum_{i \neq k}\bigg|(  {(\gb^{\text{IRS-e}})} ^H \Thetab \Fb + {(\gb^{\text{BS-e}})}^H) \wb_i\bigg|^2+\sigma_e^2} \bigg). 
		\end{align}
  After that, using the fact that $\Thetab$ is a diagonal matrix, we have the following equations
    \begin{subequations}
  \begin{align}
  	&(\hb_{k}^{\text{IRS-u}}) ^H \Thetab = \boldsymbol{\alpha}^T  \text{diag}(\hb_{k}^{\text{IRS-u}})^*., \\
  	&\big({\gb^{\text{IRS-e}})} ^H \Thetab =   \boldsymbol{\alpha}^T  \text{diag}(\gb^{\text{IRS-e}})^* .
  \end{align}
\end{subequations}
 where $\boldsymbol{\alpha} = [\boldsymbol{\alpha}_1,...,\boldsymbol{\alpha}_L] \in\mathbb{C}^{LN \times 1}$.
 Then, by defining $2^{m_k}=\sum_{i = 1} ^K \big| \big(   \boldsymbol{\alpha}^T\text{diag}(\hb_{k}^{\text{IRS-u}})^* \Fb + {(\hb_{k}^{\text{BS-u}})}^H\big) \wb_i\big|^2 + \sigma^2_k, \: \forall k $, 
  $2^{n_k} = \sum_{i \neq k}  \big|\big(\boldsymbol{\alpha}^T  \text{diag}(\hb_{k}^{\text{IRS-u}})^*\Fb + {(\hb_{k}^{\text{BS-u}})}^H\big) \wb_i\big|^2 + \sigma^2_k, \: \forall k$,
  $2^u=\sum_{i = 1} ^K \big|\big(\boldsymbol{\alpha}^T  \text{diag}(\gb^{\text{IRS-e}})^* \Fb + {(\gb^{\text{BS-e}})}^H\big) \wb_i\big|^2 + \sigma_e^2$, and $2^{z_k}=\sum_{i \neq k}  \big|\big(  \boldsymbol{\alpha}^T\text{diag}(\gb^{\text{IRS-e}})^* \Fb + {(\gb^{\text{BS-e}})}^H\big) \wb_i\big|^2 + \sigma_e^2, \forall k$, as auxiliary variables, $\mathcal{P}2$ can be rewritten as
\begin{subequations}
  	\begin{align}
  		&\max_{\boldsymbol{\alpha},m_k,n_k,u,z_k,\epsilon_{l,n}} \: \:  \min_k \: (m_k - n_k -u + z_k)+P_e\sum_{l,n}\epsilon_{l,n} \quad  \\\label{eq24b}
  		&\text{s.t.}  \: \: \: \eqref{eq.constraint.P2}, \eqref{eq.constraint.P2_1}, \eqref{eq.constraint.P2_2}, \\
  		& \: \sum_{i = 1} ^K \big| \big(\boldsymbol{\alpha}^T  \text{diag}(\hb_{k}^{\text{IRS-u}})^* \Fb + {(\hb_{k}^{\text{BS-u}})}^H\big) \wb_i\big|^2 + \sigma_k^2 \ge 2^{m_k} , \: \forall k, \\ 
  		 & \:	\sum_{i \neq k}  \big| \big(\boldsymbol{\alpha}^T  \text{diag}(\hb_{k}^{\text{IRS-u}})^* \Fb + {(\hb_{k}^{\text{BS-u}})}^H\big) \wb_i\big|^2 + \sigma_k^2 \le 2^{n_k} , \: \forall k, \\ 
  		 & \:	\sum_{i = 1} ^K \big| \big(\boldsymbol{\alpha}^T  \text{diag}(\gb^{\text{IRS-e}})^*  \Fb + {(\gb^{\text{BS-e}})}^H\big) \wb_i\big|^2 + \sigma_e^2 \le 2^u, \quad\quad\quad \\
  		 & \:\sum_{i \neq k}  \big|\big(\boldsymbol{\alpha}^T  \text{diag}(\gb^{\text{IRS-e}})^*  \Fb + {(\gb^{\text{BS-e}})}^H\big) \wb_i\big|^2 + \sigma_e^2 \ge 2^{z_k} , \: \forall k. 
  	\end{align}
  \end{subequations}
Finally, by introducing auxiliary variable $b=\underset{k}{\text{min}}(m_k-n_k+z_k)$ and also defining $\hb_{k,i}'= \text{diag}(\hb_{k}^{\text{IRS-u}})^* \Fb \wb_i$, $\hb_{k,i}''= {(\hb_{k}^{\text{BS-u}})}^H \wb_i$, and $\gb_i'=  \text{diag}(\gb^{\text{IRS-e}})^*\Fb \wb_i, \: \: \gb_i''= {(\gb^{\text{BS-e}})}^H\wb_i$, \eqref{P2_Eq_c2} constitutes, optimization problem $\mathcal{P}2$ is reformulated as \eqref{P2_Eq} and the proof is completed.
\end{IEEEproof}
Objective function in \eqref{P2_obj} is affine, thus it is concave. \eqref{P2_Eq_c2}, \eqref{eq.constraint.P2}, \eqref{eq.constraint.P2_2} are also convex constraints. The other constraints involving \eqref{eq.constraint.P2_1}, \eqref{P2_Eq_c3}, \eqref{P2_Eq_c4}, \eqref{P2_Eq_c5} and \eqref{P2_Eq_c6} are non-convex ones. All of these constraints have the property of convexity for both sides of the inequalities. Therefore, similar to the Lemma 2, in the following, we introduce Lemma 3 to combat non-convexity of these constraints by employing the SCA method.

\noindent{\bf Lemma 3}.
\eqref{eq.constraint.P2_1}, \eqref{P2_Eq_c3}, \eqref{P2_Eq_c4},\eqref{P2_Eq_c5}, and \eqref{P2_Eq_c6} can be approximated as the following convex constraints using the SCA approach.
\begin{subequations}
	\begin{align}
	\label{relaxed2}
	&  |\bar{\alpha}_{l,n}|^2+2\bar{\alpha}_{l,n}^{\Re}(\alpha_{l,n}^{\Re}-\bar{\alpha}_{l,n}^{\Re})+\nonumber\\& \qquad \qquad \qquad \qquad \qquad 2\bar{\alpha}_{l,n}^{\Im}(\alpha_{l,n}^{\Im}-\bar{\alpha}_{l,n}^{\Im}) \geq 1-\epsilon_{l,n}, \forall l,n \\\label{sca2_c1} 
	&  \sum_{i = 1} ^K \big| \bar{\boldsymbol{\alpha}}^T \hb_{k,i}' + \hb_{k,i}''\big|^2 +2 \sum_{i=1}^K \big[({\boldsymbol{\chi}}_{i,k}^\Re)^T(\boldsymbol{\alpha}^{\Re} - \bar{\boldsymbol{\alpha}}^{\Re})+\nonumber\\&\qquad\qquad\qquad\qquad\quad({\boldsymbol{\chi}}_{i,k}^\Im)^T(\boldsymbol{\alpha}^{\Im} - \bar{\boldsymbol{\alpha}}^{\Im})\big] +\sigma^2_k \ge 2^{m_k} , \: \forall k, \\ \label{sca2_c2}
	&	\sum_{i \neq k}  \big| \boldsymbol{\alpha}^T \hb_{k,i}'+ \hb_{k,i}''\big|^2 + \sigma^2_k \le 2^{\bar{n}_k} +\ln(2) 2^{\bar{n}_k} (n_k - \bar{n}_k), \: \forall k, \\ \label{sca2_c3}
	&	\sum_{i = 1} ^K \big|\boldsymbol{\alpha}^T  \gb_i' +\gb_i''\big|^2 + \sigma_e^2 \le 2^{\bar{u}} +\ln(2) 2^{\bar{u}} (u - \bar{u}), \\\label{sca2_c4}
	&	\sum_{i \neq k}  \big| \bar{\boldsymbol{\alpha}}^T  \gb_i' +\gb_i''\big|^2+2 \sum_{i \neq k} \big[({\boldsymbol{\lambda}}_{i}^\Re)^T(\boldsymbol{\alpha}^{\Re} - \bar{\boldsymbol{\alpha}}^{\Re})+\nonumber\\&\qquad\qquad\qquad\qquad\qquad({\boldsymbol{\lambda}}_{i}^\Im)^T(\boldsymbol{\alpha}^{\Im} - \bar{\boldsymbol{\alpha}}^{\Im})\big] + \sigma_e^2 \ge 2^{z_k} , \: \forall k,
	\end{align}
\end{subequations}
where $\boldsymbol{\alpha} = \boldsymbol{\alpha}^{\Re} + j\boldsymbol{\alpha}^{\Im}$, $\hb_{k,i}' = \hb_{k,i}'^{\Re} + j\hb_{k,i}'^{\Im}$, $\gb_i' = \gb_i'^{\Re} + j\gb_i'^{\Im}$. $\bar{\boldsymbol{\alpha}}^{\Re}$, $\bar{\boldsymbol{\alpha}}^{\Im}$, $\bar{n}_k$, and $\bar{u}$ are also the points where the first-order Taylor approximation are made. Moreover, ${\boldsymbol{\chi}}_{i,k}^\Re, {\boldsymbol{\chi}}_{i,k}^\Im, {\boldsymbol{\lambda}}_{i}^\Re,$ and ${\boldsymbol{\lambda}}_{i}^\Im \in \mathbb{C}^{M\times1}$ are denoted as
\begin{subequations}
	\begin{align}
	&\begin{bmatrix} 
	{\boldsymbol{\chi}}_{i,k}^\Re\\
	{\boldsymbol{\chi}}_{i,k}^\Im
	\end{bmatrix}
	=
	\begin{bmatrix} 
	\hb_{k,i}'^{\Re} & \hb_{k,i}'^{\Im} \\
	-\hb_{k,i}'^{\Im} & \hb_{k,i}'^{\Re} 
	\end{bmatrix}
	\begin{bmatrix} 
	\Re\big\{\bar{\boldsymbol{\alpha}}^T\hb_{k,i}'+\hb_{k,i}''\big\}  \\
	\Im\big\{\bar{\boldsymbol{\alpha}}^T\hb_{k,i}'+\hb_{k,i}'\big\}  
	\end{bmatrix} \\
	&\begin{bmatrix} 
	{\boldsymbol{\lambda}}_{i}^\Re\\
	{\boldsymbol{\lambda}}_{i}^\Im
	\end{bmatrix}
	=
	\begin{bmatrix} 
	\gb_{i}'^{\Re} & \gb_{i}'^{\Im} \\
	-\gb_{i}'^{\Im} & \gb_{i}'^{\Re} 
	\end{bmatrix}
	\begin{bmatrix} 
	\Re\big\{\bar{\boldsymbol{\alpha}}^T\gb_{i}'+\gb_{i}''\big\}  \\
	\Im\big\{\bar{\boldsymbol{\alpha}}^T\gb_{i}'+\gb_{i}'\big\}  
	\end{bmatrix}
	\end{align}
\end{subequations}
\begin{IEEEproof} 
	Due to the convexity of $2^{n_k}$ and $2^u$ in right-hand side of \eqref{P2_Eq_c4} and \eqref{P2_Eq_c5} their first-order Taylor expansion around $\bar{n}_k$ and $\bar{u}$ creates themselves lower bounds and leads to the convex-approximated constraints \eqref{sca2_c2} and \eqref{sca2_c3}, respectively. In contrast to \eqref{P2_Eq_c4} and \eqref{P2_Eq_c5}, in \eqref{eq.constraint.P2_1}, \eqref{P2_Eq_c3} and \eqref{P2_Eq_c6}, although greater sides of the inequalities are convex functions, they are not differentiable w.r.t. complex variable $\alpha_{l,n}$ and vector $\boldsymbol{\alpha}$. To tackle this problem, we break down $\alpha_{l,n}$ and $\boldsymbol{\alpha}$ into two distinctive real and imaginary parts as ${\alpha_{l,n}} = \alpha_{l,n}^{\Re} + j\alpha_{l,n}^{\Im}$, and $\boldsymbol{\alpha} = \boldsymbol{\alpha}^{\Re} + j\boldsymbol{\alpha}^{\Im}$, respectively. Then, according to Lemma 1, we perform first-order Taylor approximation around ${\bar{\alpha}_{l,n}} = \bar{\alpha}_{l,n}^{\Re} + j\bar{\alpha}_{l,n}^{\Im}$, and $\bar{\boldsymbol{\alpha}} = \bar{\boldsymbol{\alpha}}^{\Re} + j\bar{\boldsymbol{\alpha}}^{\Im}$ and approximately reformulate \eqref{eq.constraint.P2_1}, \eqref{P2_Eq_c3}, and \eqref{P2_Eq_c6} as convex constraints \eqref{relaxed2}, \eqref{sca2_c1} and \eqref{sca2_c4}.
\end{IEEEproof}	

Employing Lemma 3 as well as SCA method, non-convex problem \eqref{P2} can be approximated with a smaller feasible set than the original problem as the following iterative convex one.
	\begin{subequations}
	\label{P2_final}
	\begin{align}
	&\mathcal{P}2.1: \max_{\boldsymbol{\alpha},m_k,n_k,u,z_k,b,\epsilon_{l,n}} \: \: (-u+b+P_e\sum_{l = 1} ^{NL}\epsilon_{l,n}) \\ 	
	& \qquad\text{s.t.}  \: \:\: \:  |\alpha_{l,n}|^2 \leq 1 , \: \:   \:  \forall l,n, \\ 
	&\qquad\: |\bar{\alpha}_{l,n}^{(t_2)}|^2+2\bar{\alpha}_{l,n}^{\Re(t_2)}(\alpha_{l,n}^{\Re}-\bar{\alpha}_{l,n}^{\Re(t_2)})+\nonumber\\&\qquad\qquad\qquad\qquad 2\bar{\alpha}_{l,n}^{\Im(t_2)}(\alpha_{l,n}^{\Im}-\bar{\alpha}_{l,n}^{\Im(t_2)}) \geq 1-\epsilon_{l,n} , \: \:  \forall l,n \\ 
	& \qquad\: 0\leq\epsilon_{l,n}\leq 1,\forall l,n, \\ 
	&\qquad\: m_k-n_k+z_k \geq b , \: \forall k, \\
	& \sum_{i = 1} ^K \big|(\bar{\boldsymbol{\alpha}}^{(t_2)})^T \hb_{k,i}' + \hb_{k,i}''\big|^2 + 2 \sum_{i=1}^K \big[({\boldsymbol{\chi}}_{i,k}^{\Re(t_2)})^T(\boldsymbol{\alpha}^{\Re} - \bar{\boldsymbol{\alpha}}^{\Re(t_2)})+\nonumber\\&\qquad\qquad\qquad\quad({\boldsymbol{\chi}}_{i,k}^{\Im(t_2)})^T(\boldsymbol{\alpha}^{\Im} - \bar{\boldsymbol{\alpha}}^{\Im(t_2)})\big] +\sigma^2_k \ge 2^{m_k} , \: \forall k, \\ 
	&\qquad\:	\sum_{i \neq k}  \big| \boldsymbol{\alpha}^T \hb_{k,i}'+ \hb_{k,i}''\big|^2 + \sigma^2_k \le 2^{\bar{n}_k^{(t_2)}}+\nonumber\\& \qquad\qquad\qquad\qquad\qquad\qquad\qquad\ln(2) 2^{\bar{n}_k^{(t_2)}} (n_k - \bar{n}_k^{(t_2)}), \: \forall k, \\ 
	&\qquad \: \sum_{i = 1}^K \big|\boldsymbol{\alpha}^T  \gb_i' +\gb_i''\big|^2 + \sigma_e^2 \le 2^{\bar{u}^{(t_2)}} +\nonumber\\&\qquad\qquad\qquad\qquad\qquad\qquad\qquad\ln(2) 2^{\bar{u}^{(t_2)}} (u - \bar{u}^{(t_2)}), \quad\quad\quad \\
	& \:	\sum_{i \neq k} \big| (\bar{\boldsymbol{\alpha}}^{(t_2)})^T \gb_i' +\gb_i''\big|^2+2 \sum_{i \neq k} \big[({\boldsymbol{\lambda}}_{i}^{\Re(t_2)})^T(\boldsymbol{\alpha}^{\Re} - \bar{\boldsymbol{\alpha}}^{\Re(t_2)})+\nonumber\\&\qquad\qquad\qquad\qquad({\boldsymbol{\lambda}}_{i}^{\Im(t_2)})^T(\boldsymbol{\alpha}^{\Im} - \bar{\boldsymbol{\alpha}}^{\Im(t_2)})\big] + \sigma_e^2 \ge 2^{z_k}, \: \forall k,
	\end{align}
\end{subequations}	
where the optimization problem is relevant to the $t_2$-th iteration. \textbf{Algorithm 2} summarizes the proposed algorithm for solving $\mathcal{P}2$.     
     \begin{algorithm}[H]
     	\caption{Proposed algorithm for solving $\mathcal{P}2$.}
     	\begin{algorithmic}[1]   
     		\renewcommand{\algorithmicensure}{\textbf{Outputs:}}
     		\renewcommand{\algorithmicrequire}{\textbf{Initialization:}}	
     		\ENSURE  $ {\boldsymbol{\alpha}_1},...,{\boldsymbol{\alpha}_L}, \epsilon_{l,n}, \forall l,n.$		
     		\REQUIRE  ${t_{2,\text{max}}},$ ${\varepsilon_2},$ ${\wb_1},...,{\wb_K},$ $t_2=0,$ $\bar{n}_k^{(0)},$ $\bar{u}^{(0)},$ ${\bar{\boldsymbol{\alpha}}_1}^{(0)},...,{\bar{\boldsymbol{\alpha}}_K}^{(0)}$. 
     		\STATE \textbf{Repeat}
     		\STATE Solve the problem $\mathcal{P}2.1$ to obtain the optimal solution ${{\boldsymbol{\alpha}}_1}^{(t_2)},...,{{\boldsymbol{\alpha}}_K}^{(t_2)}$,
     		\STATE Set $ t_2=t_2+1 $,
     		\STATE Update $\bar{n}_k^{(t_2+1)}=n_k^{(t_2)}, \: \bar{u}^{(t_2+1)}=u^{(t_2)}$, ${\bar{\boldsymbol{\alpha}}_1}^{(t_2+1)}={{\boldsymbol{\alpha}}_1}^{(t_2)},...,{\bar{\boldsymbol{\alpha}}_K}^{(t_2+1)}={{\boldsymbol{\alpha}}_K}^{(t_2)}$ 
     		\STATE \textbf{Until} $ \left| {\frac{{\underset{k}{\text{min}} {R'}_{s,k}^{(t_2)} -\underset{k}{\text{min}} {R'}_{s,k}^{(t_2-1)}}}{{\underset{k}{\text{min}}{R'}_{s,k}^{(t_2-1)}}}} \right| \le {\varepsilon_2 } $ or $t_2 = t_{2,\text{max}}$.
     	\end{algorithmic}
     \end{algorithm}

     	In \textbf{Algorithm 2}, ${t_{2,\text{max}}}, {\varepsilon_2},{\wb_1,...,\wb_K}$ are the maximum number of iterations of the algorithm, maximum acceptable relative error of the objective function between two consecutive outputs, and $K$ BS beamforming vectors, respectively. Here, the objective function at the $t_2$-th iteration is defined as  $R_{s,k}^{'(t_2)} = R_{s,k}^{(t_2)} + P_e\sum_{l,n}\epsilon^{(t_2)}_{l,n}$.
     IRS phase profiles are outputs of this algorithm. 
     In the initialization step, we randomly generate  ${\bar{\boldsymbol{\alpha}}_1}^{(0)},...,{\bar{\boldsymbol{\alpha}}_L^{(0)}}$ with unit absolute value and uniform phase between 0 and $2\pi$. Considering the definition of auxiliary variables $n_k$ and $u$ in Proposition 2, $\bar{n}_k^{(0)}$ and $\bar{u}^{(0)}$ can be obtained as
     \begin{subequations}
		\begin{align}
		& \bar{n}_k^{(0)} = \log_2\bigg(\sum_{i \neq k}  \big| (\bar{\boldsymbol{\alpha}}^{(0)})^T \hb_{k,i}'+ \hb_{k,i}''\big|^2 + \sigma^2_k\bigg), \: \forall k,\\
		&\bar{u}^{(0)}=\log_2\bigg(\sum_{i = 1}^K \big|(\bar{\boldsymbol{\alpha}}^{(0)})^T  \gb_i' +\gb_i''\big|^2 + \sigma_e^2\bigg),
		\end{align}
		\end{subequations}      
 After solving $\mathcal{P}2.1$ at the $t_2$-th iteration, values for $\bar{n_k}^{(t_2+1)}$ and $\bar{u}^{(t_2+1)}$, ${\bar{\boldsymbol{\alpha}}_1}^{(t_2+1)},...,{\bar{\boldsymbol{\alpha}}_K}^{(t_2+1)}$ are achieved at step 4. After the update process, it is assured that the new values are in the feasible set of the optimization problem in the next iteration. This process continues until the convergence criterion is met. 
       \subsection{Overall Proposed Algorithm Based on BCD for Solving $\mathcal{P}'$}
       The overall proposed algorithm to solve $\mathcal{P'}$ (relaxed problem) is brought in details in \textbf{Algorithm 3}. In this algorithm, ${t_{\text{max}}}$ and ${\varepsilon}$ are the maximum number of iterations and maximum acceptable relative error of the objective function between two consecutive iterations, respectively. The objective function at the $t$-th iteration in this algorithm is defined as $R_{s,k}^{'(t)} = R_{s,k}^{(t)} + P_e\sum_{l,n}\epsilon^{(t)}_{l,n}$, which follows the same formulation as the previous algorithms.
       In the $t$-th iteration of \textbf{Algorithm 3}, using the obtained values of ${\boldsymbol{\alpha}_1}^{(t-1)},...,{\boldsymbol{\alpha}_L}^{(t-1)}$ in the previous iteration ($(t-1)$-th iteration) of the \textbf{Algorithm 2}, the sub-problem 1 is solved. After that, employing the results of \textbf{Algorithm 1}, \textbf{Algorithm 2} runs again. This process continues until the convergence criterion of the algorithm is satisfied. Convergence of all algorithms are also guaranteed which will be discussed in Section \ref{sec5}.
 
        \begin{algorithm}[H]
    	\caption{Proposed algorithm for solving $\mathcal{P}'$.}
    	\begin{algorithmic}[1]
    		\renewcommand{\algorithmicensure}{\textbf{Outputs:}}
    		\renewcommand{\algorithmicrequire}{\textbf{Initialization:}}	
    		\ENSURE  ${\wb_1},...,{\wb_K}$ and $ {\boldsymbol{\alpha}_1},...,{\boldsymbol{\alpha}_L}, \epsilon_{l,n},\forall l,n. $		
    		\REQUIRE ${t_{\text{max}}},$${\varepsilon}$, $t= 0$, ${\wb_1}^{(0)},...,{\wb_K}^{(0)},$ ${\boldsymbol{\alpha}_1}^{(0)},...,$ ${\boldsymbol{\alpha}_L}^{(0)},$ $\epsilon^{(0)}_{l,n}, \forall l,n$ 
    		\STATE \textbf{Repeat}
    		\STATE Given ${\boldsymbol{\alpha}_1}^{(t-1)},...,{\boldsymbol{\alpha}_L}^{(t-1)}$, solve $\mathcal{P}$1 based on \textbf{Algorithm 1} and obtain ${\wb_1}^{(t)},...,{\wb_K}^{(t)}$,
    			\STATE Given ${\wb_1}^{(t)},...,{\wb_K}^{(t)}$, solve $\mathcal{P}$2 based on \textbf{Algorithm 2} and obtain ${\boldsymbol{\alpha}_1}^{(t)},...,{\boldsymbol{\alpha}_L}^{(t)}$,
    		\STATE Set $ t=t+1 $, 
    		\STATE \textbf{Until} $ \left| {\frac{{\underset{k}{\text{min}} {R'}_{s,k}^{(t)} -\underset{k}{\text{min}} {R'}_{s,k}^{(t-1)}}}{{\underset{k}{\text{min}}{R'}_{s,k}^{(t-1)}}}} \right| \le {\varepsilon } $ or $t = t_{\text{max}}$.	
    	\end{algorithmic}
    \end{algorithm}
\subsection{Mapping Algorithm to Acquire Feasible IRSs Phase Shifts}
\label{mapping}
After solving $\mathcal{P}'$, because of continuous phase relaxation and also introduced penalizing concept, the obtained phase shifts are not in the feasible set in terms of both magnitude and phase. In the mapping algorithm, we map the obtained values of phase shifts to the feasible one according to the following criterion:
\begin{align}
\hat{\theta}_{l,n}=\underset{i\in\{0,...,N-1\}}{\text{argmin}}\big|\alpha_{l,n}-e^{j (i\Delta\theta)}\big|, \forall l,n
\end{align}
where $\Delta\theta = \frac{2\pi}{N}$ is the phase resolution of the IRSs. Here, we use Euclidean distance as the mapping metric because of its simplicity, low-complexity, and high performance even with a large number of phase-shifts.
\section {Performance Analysis}
\label{sec5}
Here, we analyze the convergence and computational complexity of our proposed algorithm.
\subsection{Convergence}
Before investigating the convergence property of the proposed algorithms, we introduce following theorem which will be used throughout the convergence analysis.
\begin{theorem}
	\label{t1_conv}
Consider a maximization problem with objective function $f_0(\xb)$ which is concave, bounded and differentiable, as well as convex and differentiable constraints $f_i(\xb) \ge 0,  \forall i\in\{1,...,N\}$, the SCA method based on first-order Taylor expansion of $f_i(\xb),~\forall i$ around $\bar{\xb}^{(t)}$ and updating rule $ {\bar \xb^{(t)}} = {\xb^{\ast(t - 1)}}$ converges in non-decreasing fashion to a Karush-Kuhn-Tucker (KKT) point of the original problem. 
\end{theorem}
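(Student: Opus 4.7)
The plan is to establish the result via the standard three-pronged SCA convergence argument: (i) monotonic non-decrease of the objective along iterates, (ii) boundedness implying sequence convergence of the objective, and (iii) identification of any limit point of the iterates with a KKT point of the original problem. The key structural fact that I would exploit throughout is that, because each $f_i$ is convex and differentiable, its first-order Taylor expansion around any point $\bar{\xb}^{(t)}$ is a global affine underestimator, i.e., $f_i(\xb)\ge f_i(\bar{\xb}^{(t)})+\nabla f_i(\bar{\xb}^{(t)})^T(\xb-\bar{\xb}^{(t)})$. Consequently, the surrogate feasible set at iteration $t$ is an inner approximation of the true feasible set and, crucially, is tangent to it at the expansion point (values and gradients agree there).

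First, I would prove monotonicity. Let $\xb^{*(t)}$ denote the optimal solution of the surrogate problem at iteration $t$ with expansion point $\bar{\xb}^{(t)}=\xb^{*(t-1)}$. Evaluating the affine surrogate constraint at the expansion point yields exactly $f_i(\xb^{*(t-1)})\ge 0$, which holds by the feasibility of $\xb^{*(t-1)}$ at the previous iteration. Hence $\xb^{*(t-1)}$ is feasible for the $t$-th surrogate problem, and optimality of $\xb^{*(t)}$ gives $f_0(\xb^{*(t)})\ge f_0(\xb^{*(t-1)})$. Combined with the standing assumption that $f_0$ is bounded above, the monotone convergence theorem guarantees that $\{f_0(\xb^{*(t)})\}$ converges.

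Next, I would analyze the iterates themselves. Assuming the feasible set is compact (or that the iterates remain in a bounded region, which follows from the power/unit-modulus style constraints in the paper's sub-problems), the Bolzano–Weierstrass theorem yields a convergent subsequence $\xb^{*(t_j)}\to \xb^\ast$. Because the update rule forces $\bar{\xb}^{(t_j+1)}=\xb^{*(t_j)}$, the expansion points also converge to $\xb^\ast$ along that subsequence. Writing the KKT conditions of the surrogate problem at iteration $t_j$ and passing to the limit using continuity of $\nabla f_i$ and $\nabla f_0$, I would invoke the tangency property: the gradient of the affine surrogate at $\xb^\ast$ equals $\nabla f_i(\xb^\ast)$, and the value of the surrogate at $\xb^\ast$ equals $f_i(\xb^\ast)$. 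Therefore every stationarity, primal-feasibility, dual-feasibility, and complementary-slackness condition of the surrogate at the limit collapses verbatim onto the KKT system of the original problem, proving $\xb^\ast$ is a KKT point.

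The main obstacle, and the step requiring the most care, is the limiting argument for the Lagrange multipliers in step (iii): one must ensure the surrogate multiplier sequence is bounded so that a convergent subsequence exists. I would address this by assuming a standard constraint qualification (e.g., Mangasarian–Fromovitz or Slater for the surrogate), which is mild and typically inherited from the original problem, and which guarantees uniform boundedness of the multipliers along the iterates. A secondary subtlety is that the theorem only asserts convergence to a KKT point, not necessarily a global optimum; I would explicitly note this, since the original constraints $f_i(\xb)\ge 0$ define a non-convex feasible set (sublevel-complement of a convex function) and SCA cannot generally escape that non-convexity.
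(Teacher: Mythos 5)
Your proposal is correct in substance and shares the paper's skeleton --- the global-underestimator/tangency lemma for convex $f_i$, feasibility of $\xb^{\ast(t-1)}$ in the $t$-th surrogate problem yielding a non-decreasing and bounded (hence convergent) objective sequence --- but the two arguments part ways at the KKT identification. The paper argues via a fixed point: it asserts that for large $t$ the iterates stabilize, $\xb^{\ast(t)}=\xb^{\ast(t-1)}=\bar\xb^{(t)}$, so the expansion point coincides with the surrogate optimum; equality of values and gradients at that point then turns the surrogate KKT system (written under Slater's condition) verbatim into the KKT system of the original problem. You instead take a subsequential route: compactness gives $\xb^{\ast(t_j)}\to\xb^{\ast}$, you pass the surrogate KKT conditions to the limit using continuity of the gradients, and you explicitly handle boundedness of the multipliers via MFCQ/Slater, a point the paper never addresses. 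What your route buys is that it does not presuppose stabilization of the iterates; what it costs is one step you gloss over: the surrogate problem at iteration $t_j$ is tangent to the original problem at the expansion point $\bar\xb^{(t_j)}=\xb^{\ast(t_j-1)}$, not at $\xb^{\ast(t_j)}$, and along your subsequence the previous iterates need not converge to the same limit $\xb^{\ast}$. Closing this requires either whole-sequence convergence of the iterates (which is essentially the paper's unproven assertion) or a consecutive-iterate argument showing $\|\xb^{\ast(t)}-\xb^{\ast(t-1)}\|\to 0$ (e.g., via uniqueness or strong concavity of the surrogate maximizer, or a proximal regularization). In that sense your sketch is at least as rigorous as the paper's own proof --- both hinge on the same unestablished iterate-convergence step --- while your explicit treatment of multiplier boundedness and your caveat that only a KKT point (not a global optimum) is guaranteed are genuine improvements over the published argument.
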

\begin{proof}
	Refer to appendix \ref{Apen2}.	  
\end{proof}
According to the Theorem \eqref{t1_conv} and considering the fact that non-differentiable constraints \eqref{eqt3p} and \eqref{eqt5} have replaced by differentiable ones in Lemma2, \textbf{Algorithm 1} which is based on SCA method converges to a KKT point of $\mathcal{P}1$. Similar to \textbf{Algorithm 1}, convergence of \textbf{Algorithm 2} is also guaranteed to a KKT point of the original problem.

The following theorem reveals convergence of the overall proposed algorithm. 
 
 \begin{theorem}
 	\textbf{Algorithm 3} converges to a finite value in a non-decreasing fashion.
 \end{theorem}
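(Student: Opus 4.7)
The plan is to invoke the monotone convergence theorem for real sequences: I would establish that the sequence of objective values produced by \textbf{Algorithm 3} is (i) non-decreasing from one outer iteration to the next and (ii) uniformly bounded above, which together force convergence to a finite limit.

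For monotonicity, I would denote by $\mathcal{F}(\wb,\boldsymbol{\alpha},\epsilon) \triangleq \min_k R_{s,k}(\wb,\boldsymbol{\alpha}) + P_e \sum_{l,n} \epsilon_{l,n}$ the objective of $\mathcal{P}'$ and compare $\mathcal{F}$ across the two block updates at outer iteration $t$. After step 2 of \textbf{Algorithm 3}, with $\boldsymbol{\alpha}^{(t-1)}$ and $\epsilon^{(t-1)}$ fixed, \textbf{Algorithm 1} produces $\wb^{(t)}$; by Theorem \ref{t1_conv} applied to $\mathcal{P}1.1$ (whose hypotheses I would verify: the reformulated objective $-r+a$ is affine hence concave, and Lemma 2 supplies the differentiable convex constraints replacing the non-convex ones), the inner SCA sequence is itself non-decreasing in the true objective of $\mathcal{P}1$, so $\mathcal{F}(\wb^{(t)},\boldsymbol{\alpha}^{(t-1)},\epsilon^{(t-1)}) \ge \mathcal{F}(\wb^{(t-1)},\boldsymbol{\alpha}^{(t-1)},\epsilon^{(t-1)})$, provided \textbf{Algorithm 1} is warm-started at $\wb^{(t-1)}$ (which lies in the feasible set of $\mathcal{P}1.1$ at the initial SCA linearization). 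An analogous argument applied to \textbf{Algorithm 2} and $\mathcal{P}2.1$ using Lemma 3 yields $\mathcal{F}(\wb^{(t)},\boldsymbol{\alpha}^{(t)},\epsilon^{(t)}) \ge \mathcal{F}(\wb^{(t)},\boldsymbol{\alpha}^{(t-1)},\epsilon^{(t-1)})$. Chaining the two inequalities gives $\mathcal{F}^{(t)} \ge \mathcal{F}^{(t-1)}$.

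For boundedness from above, I would argue term-by-term. The secrecy rate satisfies $R_{s,k} \le \log_2(1+\Gamma_k)$, and $\Gamma_k$ is bounded because $\sum_i \|\wb_i\|_2^2 \le P_{\text{BS}}$, the channel matrices have finite norm, and the noise variance $\sigma_k^2 > 0$; hence $\min_k R_{s,k}$ is bounded above by some constant $R_{\max}$. The penalty term $P_e \sum_{l,n} \epsilon_{l,n}$ is non-positive (since $P_e<0$ and $\epsilon_{l,n}\ge 0$), so it contributes no positive mass. Therefore $\mathcal{F}^{(t)} \le R_{\max}$ for all $t$, and the monotone convergence theorem yields a finite limit $\mathcal{F}^{\star}$.

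The main subtlety I anticipate is the warm-start consistency needed to glue the two inner SCA chains with the outer BCD chain. Theorem \ref{t1_conv} guarantees non-decrease only along one SCA run with a valid initial linearization point; to conclude $\mathcal{F}(\wb^{(t)},\boldsymbol{\alpha}^{(t-1)},\epsilon^{(t-1)}) \ge \mathcal{F}(\wb^{(t-1)},\boldsymbol{\alpha}^{(t-1)},\epsilon^{(t-1)})$ I must verify that $\wb^{(t-1)}$ is feasible for the SCA linearization of $\mathcal{P}1.1$ at $\bar{\wb}^{(0)} = \wb^{(t-1)}$ and that the auxiliary variables $p_k,q_k,r,s_k,a$ can be initialized consistently (by their defining equalities, as done in \textbf{Algorithm 1}). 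The analogous check is needed for \textbf{Algorithm 2}. Once these bookkeeping points are resolved, the two-line chaining argument above together with boundedness completes the proof.
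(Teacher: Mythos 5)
Your proposal is correct and follows essentially the same route as the paper: per-block non-decrease via Theorem \ref{t1_conv} applied to each SCA sub-algorithm, feasibility/warm-start of the previous block's output in the next block's first linearization, chaining the inequalities across the outer BCD iteration, and boundedness of the objective from above to conclude convergence to a finite value. Your explicit term-by-term boundedness argument and the flagged warm-start bookkeeping are just more detailed versions of what the paper states informally ("the results are feasible in the first SCA iteration of the next algorithm" and "limited resources... make the secrecy rate upper bounded").
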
	    
 	 \begin{IEEEproof} 
 	 	Based on Theorem \ref{t1_conv}, both \textbf{Algorithm 1} and \textbf{Algorithm 2}, converge to their final points, in a non-decreasing fashion with the same objective function. The convergence property of \textbf{Algorithm 1} in the global iteration $t$ is as follows
 	 \begin{align}
 	 	\label{conv1}
 	 		&R'_s(\wb^{(t-1)(0)},{\boldsymbol{\alpha}^{(t-1)}},\boldsymbol{\epsilon}^{(t-1)}) \le\nonumber\\&\qquad\qquad\qquad\qquad	R'_s(\wb^{(t-1)(1)},{\boldsymbol{\alpha}^{(t-1)}},\boldsymbol{\epsilon}^{(t-1)})\le...\le\nonumber\\&\qquad\qquad\qquad\qquad\qquad\qquad 	R'_s(\wb^{(t-1)(t_1^{\text{conv}})},{\boldsymbol{\alpha}^{(t-1)}},\boldsymbol{\epsilon}^{(t-1)})
 	 	\end{align}
 	 	where $R'_s(\wb^{(t-1)(i)},{\boldsymbol{\alpha}^{(t-1)}},\boldsymbol{\epsilon}^{(t-1)}) = R_s(\wb^{(t-1)(i)},{\boldsymbol{\alpha}^{(t-1)}}) + P_e\sum_{l = 1} ^{NL}\epsilon^{(t-1)}_{l,n}$ denotes the total modified secrecy rate at the $t$-th iteration of \textbf{Algorithm 3} and the $i$-th iteration of \textbf{Algorithm 1}. $\wb$ is also considered as $\wb = [\wb_1^T,...,\wb_K^T]^T$ for notation simplification. Here is $t_1^{\text{conv}}$ is the iteration number which the convergence criterion of \textbf{Algorithm 1} is met.
 	 	
 	 	Similar to \textbf{Algorithm 1} this property for \textbf{Algorithm 2} is as follows
        \begin{align}
 	 	\label{conv2}
 	 	&R'_s(\wb^{(t-1)},{\boldsymbol{\alpha}^{(t-1)(0)}},\boldsymbol{\epsilon}^{(t-1)(0)})\le\nonumber\\&\qquad\qquad R'_s(\wb^{(t-1)},{\boldsymbol{\alpha}^{(t-1)(1)}},\boldsymbol{\epsilon}^{(t-1)(1)})\le ...\le\nonumber\\& \qquad\qquad\qquad\qquad\qquad R'_s(\wb^{(t-1)},{\boldsymbol{\alpha}^{(t-1)(t_2^{\text{conv}})}},\boldsymbol{\epsilon}^{(t-1)(t_2^{\text{conv}})})
 	 	\end{align}
 	 	where $R'_s(\wb^{(t-1)},{\boldsymbol{\alpha}^{(t-1)(i)}},{\boldsymbol{\epsilon}^{(t-1)(i)}})=R_s(\wb^{(t-1)},{\boldsymbol{\alpha}^{(t-1)(i)}}) + P_e\sum_{l = 1} ^{NL}\epsilon^{(t-1)(i)}_{l,n}$ denotes the total secrecy rate at the $t$-th iteration of \textbf{Algorithm 3} and the $i$-th iteration of \textbf{Algorithm 2}. Here, $t_2^{\text{conv}}$ is the iteration number which the convergence criterion of the \textbf{Algorithm 2} is met.
 	 	 
 	 	Following the convergence of each of the algorithms, the results are feasible in the first SCA iteration of the next algorithm, therefore, to move from \textbf{Algorithm 1} to \textbf{Algorithm 2} and vice versa, we have the following inequalities, respectively.
 	 	 \begin{align}
 	 	 \label{conv3}
 	 	 	&R'_s(\wb^{(t-1)(t_1^{\text{conv}})},{\boldsymbol{\alpha}^{(t-1)}},{\boldsymbol{\epsilon}^{(t-1)}})\le \nonumber\\&\qquad	R'_s(\wb^{(t-1)},{\boldsymbol{\alpha}^{(t-1)(1)}},{\boldsymbol{\epsilon}^{(t-1)(1)}})\le ...\le \nonumber \\& \qquad\qquad	R'_s(\wb^{(t-1)},{\boldsymbol{\alpha}^{(t-1)(t_2^{\text{conv}})}}, {\boldsymbol{\epsilon}^{(t-1)(t_2^{\text{conv}})}})\le \nonumber\\
 	 	 	& R'_s(\wb^{(t)(1)},{\boldsymbol{\alpha}^{(t)}},{\boldsymbol{\epsilon}^{(t)}}) \le...\le	R'_s(\wb^{(t)(t_1^{\text{conv}})},{\boldsymbol{\alpha}^{(t)}}, {\boldsymbol{\epsilon}^{(t)}}),
 	 	 \end{align}
 	 	 Summarizing \eqref{conv1}, \eqref{conv2} and \eqref{conv3} we have the following inequality.   	 	  
 	 	\begin{align}
 	 	&R'_s(\wb^{(t-1)},{\boldsymbol{\alpha}^{(t-1)}}, {\boldsymbol{\epsilon}^{(t-1)}}) \le R'_s(\wb^{(t)},{\boldsymbol{\alpha}^{(t-1)}},{\boldsymbol{\epsilon}^{(t-1)}}) \le \nonumber\\& \qquad \qquad\qquad\qquad\qquad\qquad\qquad\qquad\quad R'_s(\wb^{(t)},{\boldsymbol{\alpha}^{(t)}},{\boldsymbol{\epsilon}^{(t)}}),
 	 	\end{align}
 	 	The first inequality follows the fact that at the $t$-th iteration at the step 2 of \textbf{Algorithm 3}, for the given $\boldsymbol{\alpha}$ and $\boldsymbol{\epsilon}$ which is derived from the $(t-1)$-th iteration, $\wb$ is optimized. Thus, the objective function increases. In the second inequality, this process is repeated for the given values of $\wb$ and the objective function grows by optimizing $\boldsymbol{\alpha}$ and $\boldsymbol{\epsilon}$. Therefore, the secrecy rate improves at each iteration of \textbf{Algorithm 3}. Furthermore, limited resources such as power, number of antennas and IRSs, make the value of the secrecy rate be upper bounded. In other words, the algorithm converges to a finite value in a non-decreasing fashion and the proof is completed.
 	 \end{IEEEproof} 
 \subsection{Computational Complexity}
 	 According to appendix \ref{Apen3}, per-iteration computational complexity of the \textbf{Algorithm 1} and \textbf{Algorithm 2} are given by $\mathcal{O}(KM)^{3.5}$, and $\mathcal{O}(NL+K)^{3.5}$, respectively. Assuming it takes $T_1$ and $T_2$ iterations for \textbf{Algorithm 1} and \textbf{Algorithm 2} to converge to their final points based on an accuracy metric, the total complexity of these algorithms will be $\mathcal{O}\big(T_1(KM)\big)^{3.5}$  and $\mathcal{O}\big(T_2(NL+K)\big)^{3.5}$. Considering that the BCD method converges at $T_3$ iterations, computational complexity of \textbf{Algorithm 3} will be $\mathcal{O}\bigg(\big(T_1(KM)\big)^{3.5}+\big(T_2(NL+K)\big)^{3.5}\bigg)$. Mathematically, acquiring the exact number of iterations for convergence of these algorithms is very challenging. Thus, we achieve these values based on the simulations as we will discuss in the next section. By the way, the complexity of the mapping algorithm is $\mathcal{O}(N^2L)$ which is overlooked compared with the computational cost of \textbf{Algorithm 3}. Therefore, the total complexity equals to \textbf{Algorithm 3}-relevant one.

\section{Simulation Results}
\label{sec6}
\begin{table}[t]
	\renewcommand{\arraystretch}{1.3}
	\caption{Simulation Parameters}
	\centering
	\label{table_sim}
	\begin{tabular}{c|c|c}
		\hline
		\bfseries Parameter & \bfseries Values & \bfseries Statement \\
		\hline\hline
		BW & 100MHz & available bandwidth\\
		\hline
		$ B $ & $ 3 $ & number of paths in \eqref{eq1_channel}, \eqref{eq3_channel}, \eqref{eq7_channel}, and \eqref{eq8_channel}\\
		\hline
		$\mu_1$ & 61.4dB & constant path-loss term in \eqref{eq2_channel}\\
		\hline
		$\kappa_1$ & 2 & path-loss exponent in \eqref{eq2_channel}\\
		\hline
		$ \sigma_{\xi_1} $ & 5.8dB & shadowing standard deviation in \eqref{eq2_channel}\\
		\hline
		$\mu_2$ & 72dB & constant path-loss term in \eqref{eq5_channel}\\
		\hline
		$\kappa_2$ & 2.92 & path-loss exponent in \eqref{eq5_channel}\\
		\hline
		$ \sigma_{\xi_2} $ & 8.7dB & shadowing standard deviation in \eqref{eq5_channel}\\
		\hline
		$ \sigma_k^2 $ & -95dBm  &  thermal noise power at the user-side\\
		\hline
		$ \sigma^2_e $ & -95dBm  &  thermal noise power at the eavesdropper-side\\
		\hline
		$ \epsilon_1 $, $\epsilon_2$, $\epsilon_3$ & $10^{-3}$  &  acceptable relative error for the proposed algorithms\\
		\hline
	\end{tabular}
\end{table}	
In this section, simulation results are offered to evaluate the performance of the proposed algorithm. First, simulation setup and comparison benchmarks are introduced. After that, the convergence of the proposed algorithm is investigated. Finally, numerical results with detailed discussions are presented.
\subsection{Simulation Setup and Comparison Benchmarks}
\begin{figure}[t]
	\centering
	\includegraphics[width=9cm, height=8cm]{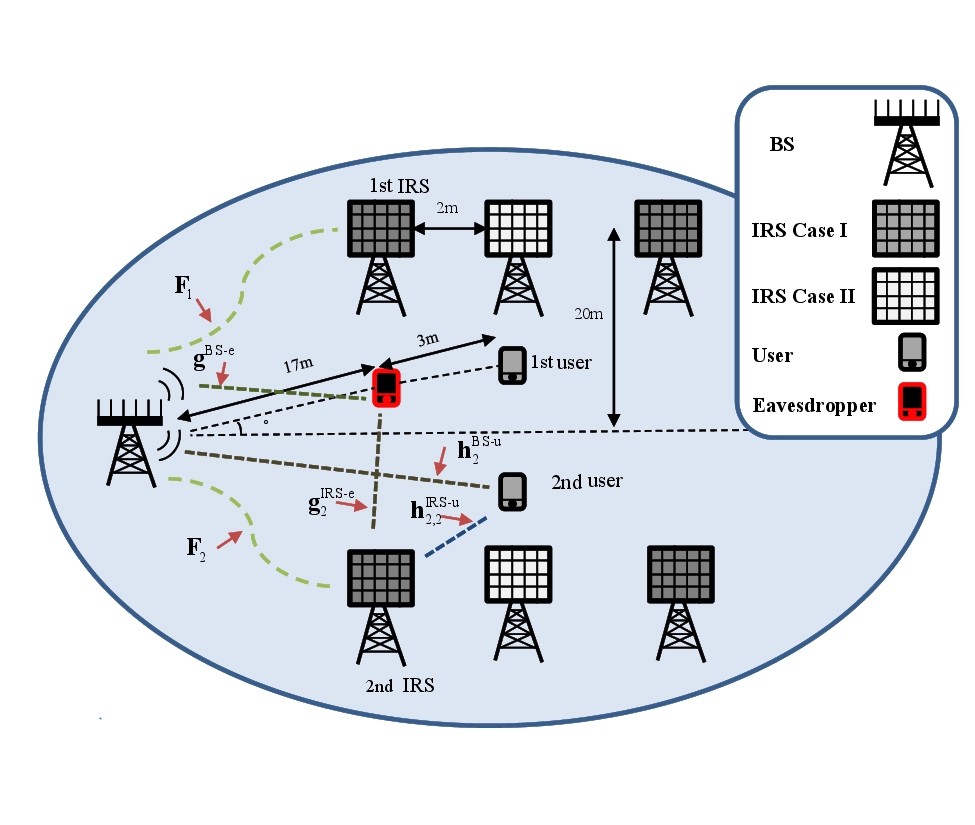}
	\caption{Our simulation setup consisting of a single BS, multiple IRSs, two users in presence of an eavesdropper.}
	\label{fig2}
\end{figure}

   Simulation parameters are listed in Table \ref{table_sim}. Fig. \ref{fig2} shows our simulation setup. The location of the BS is considered to be at the origin of the Cartesian coordinate system. As shown in Fig. \ref{fig2}, two cases of IRS numbers and distribution are considered where in the case I, $L=2$ and in the case II, $L=4$. In both cases, two users and an eavesdropper are symmetrically placed. It is assumed that the BS is equipped with a uniform linear array antennas, thus there is no possibility to perform vertical beamforming. Considering the worst case scenario, the eavesdropper is placed at a shorter distance than the legitimate users from the BS but in the same direction. 

The following benchmarks are selected for the performance comparison with our proposed scheme. 
	\begin{enumerate}
\item \textbf{Semi-definite Programming with Rank-one Relaxation (SDP-based algorithm):} The main benchmark to compare our results is  solution that is widely used in the literature \cite{jadid6, AH10, jadid7, jadid8, jadid1, jadid2}. In this method, using definition of some auxiliary variables, sub-problems $\mathcal{P}1$ and $\mathcal{P}2$, are re-written as SDP with a non-convex rank-one constraint for both sub-problems. Then, these constraints are relaxed and after solving the problem, a rank-one solution is approximated based on eigen-value decomposition (EVD) and Gaussian randomization method (EVD for $\mathcal{P}1$, and Gaussian randomization for $\mathcal{P}2$). This benchmark is called as SDP-based solution at the rest of the paper.
\item \textbf{MRT Benchmark:} The second benchmark for comparison is the MRT beamforming at the BS and optimizing IRSs phase shifts using \textbf{Algorithm 2}. In this benchmark, the complex conjugate of the channel between the BS and users is used for the active beamforming. In this method, without the need to prior information on the channel between BS and eavesdropper, and BS and IRSs reduces computational complexity by individually performing IRSs phase profiles and the BS beamforming vector optimization.  
\item \textbf{IRS-Free Benchmark:} This benchmark examines the network performance without the presence of any IRSs. By adding this benchmark, we seek to examine positive effect of IRS on the performance of our proposed network.
\end{enumerate}
\subsection{Rate of Convergence}
\begin{figure}[htp]
	\centering
	\includegraphics[width=9cm, height=8cm]{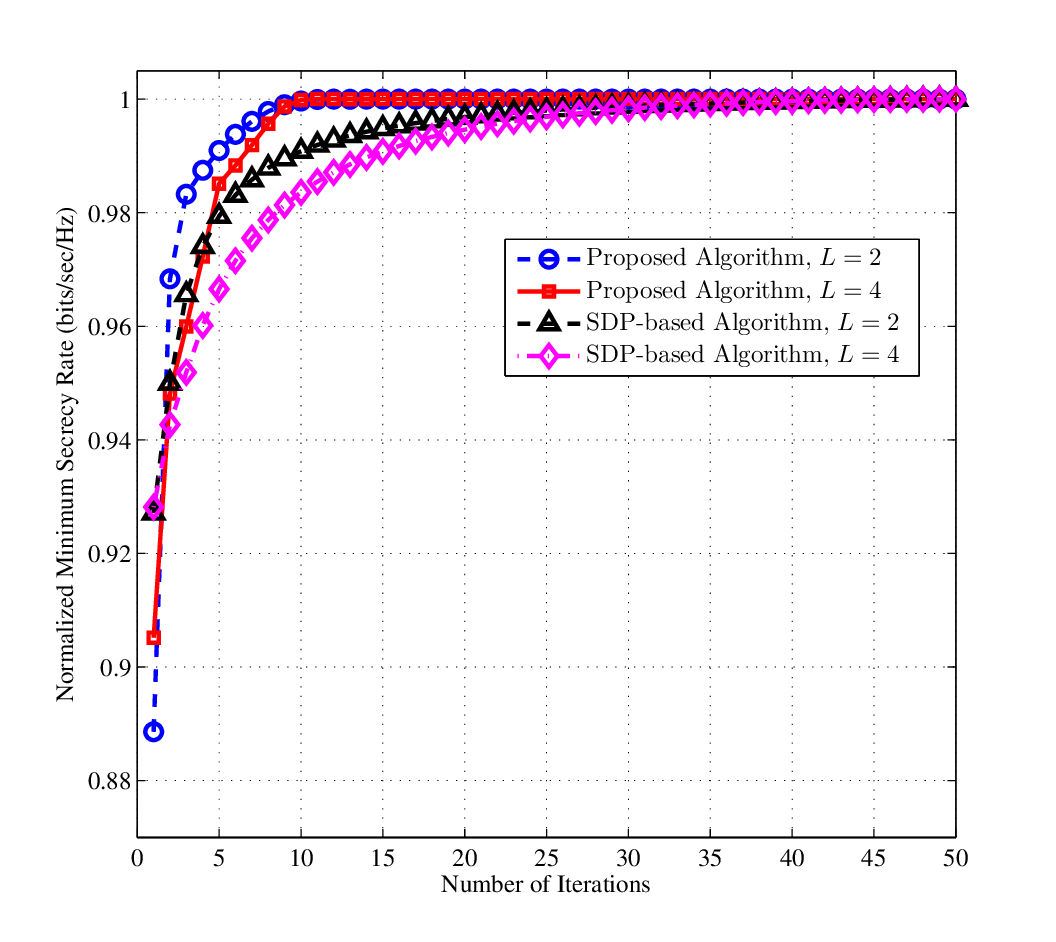}
	\caption{Convergence rate of the Algorithm 3.}
	\label{fig3}
\end{figure}
Fig. \ref{fig3} shows the sample convergence rate of \textbf{Algorithm 3} for both case I and case II compared with convergence of SDP-based solution as our main benchmark. In order to have a better informative comparison, we normalized the values of each curve by their own final points. As shown in this figure, our proposed algorithm outweighs SDP-based solution in terms of number of iterations to converge. In other words, for the same number of IRSs, our proposed algorithm converges faster than SDP-based one. This difference also grows for a larger number of IRSs. The main reason is that in SDP-based solution, semi-definite relaxation (SDR) leads to some approximated solutions for both sub-problems which reduces the relevant convergence rate. As growing number of IRSs leads to a larger number of approximated solutions due to employing Gaussian randomization and EVD method, lower convergence rate is achieved. In contrast, convergence rate of our proposed algorithm is almost insensitive to the number of IRSs.

\subsection{Numerical Results}
 All the convex optimization problems including our proposed one and benchmarks are solved using CVX optimization toolbox and the results are averaged over 1000 iterations for different channel realizations. 
\begin{figure}[t]
    	\centering
    	\includegraphics[width=9cm, height=8cm]{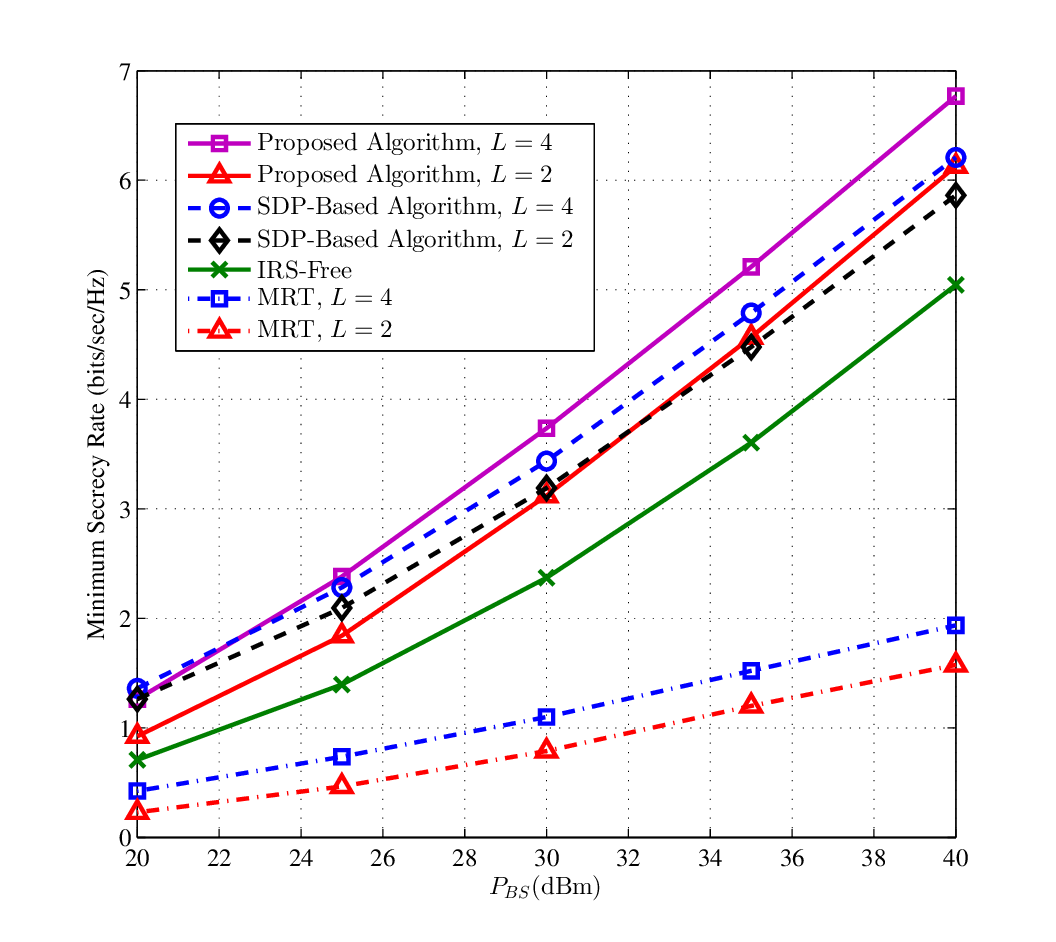}
    	\caption{Minimum secrecy rate versus the transmit power for two cases: $L=2$ and $L=4$, where $M=4$ and $N=16$.}
    	\label{fig4}
\end{figure}
\\
Fig. \ref{fig4} shows the minimum secrecy rate versus transmit power ($P_\text{BS}$) for Case I and Case II, where $M=4$ and $N=16$. It can be seen, in all criteria, as the $P_\text{BS}$ increases, higher values of the minimum secrecy rate are obtained. The reason lies in that with increasing BS power, the effect of beamforming on the received power of the user increases. On the other hand, the increase in BS's power improves the received power at the IRSs that leads to greater effect of phase shifts optimization on the minimum secrecy rate.
Our proposed algorithm with joint active and passive beamforming outperforms all the benchmarks. To be specific, our proposed algorithm has better performance than SDP-based especially at a larger number of IRSs where SDP-based suffers due to Gaussian randomization and EVD rank-one approximations. In addition, due to the lack of IRSs in the IRS-free benchmark, worse performance is obtained compared to the case with multiple IRSs.
In the MRT active beamforming vector design, as the effect of channels between BS and IRSs, and BS and eavesdropper are overlooked in the design process, worse performance than our proposed algorithm is obtained. Also, in this case the received power at the IRS-side is such low that their existence becomes less useful and results in lower performance than the IRS-Free case. More wastefulness of power consumption in the MRT beamforming than our proposed algorithm and IRS-Free case is another interesting point that can be deduced from increasing the gap among the curves at the high values of the transmission power.
Furthermore, this figure shows the importance of the number of IRSs in growth of the minimum secrecy rate, because more IRSs can increase the power level at the user-side and signal attenuation at eavesdropper. In general, Case II performs better than Case I due to a larger number of IRSs.
   \begin{figure}[t]
	\centering
	\includegraphics[width=9cm, height=8cm]{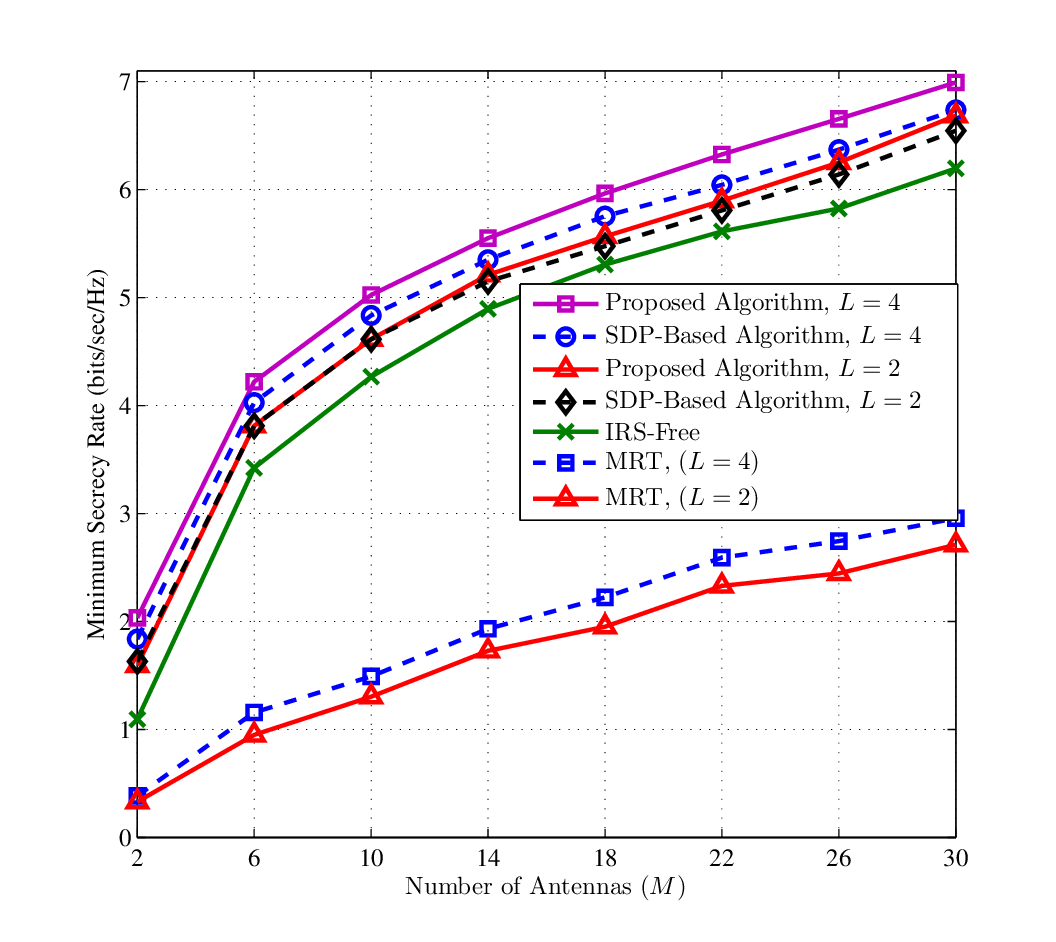}
	\caption{Minimum secrecy rate versus the number of BS antennas for two cases: $L=2$ and $L=4$.}
	\label{fig5}
\end{figure} 
\\
Fig. \ref{fig5} shows the minimum secrecy rate versus number of BS antennas $(M)$ for the proposed algorithm, SDP-based one, MRT and IRS-free benchmarks. In this simulation, the number of phase shifts and power of the BS are set to $N=$ 16 and $P_\text{BS}=$ 30dBm, respectively. As it is shown, there is an upward trend in minimum secrecy rate versus increasing number of antennas. In fact, as the number of antennas increases, the possibility of better beamforming at the BS increases, thus the secrecy rate improves. Similar to the previous figure, our proposed algorithm performs better than the all the benchmarks. Furthermore, there is an increase in gap between our proposed algorithm and SDP-based one. The reason falls into the fact that approximated solutions of SDR especially in sub-problem 1 performs worse with increasing number of antennas. Moreover, when the number of antennas increases, the gaps between the proposed algorithm and both IRS-free and MRT cases increases.
This is because the positive effect of beamforming in the proposed algorithm is more obvious than the others with increasing number of antennas.
   \begin{figure}[t]
	\centering
	\includegraphics[width=9cm, height=8cm]{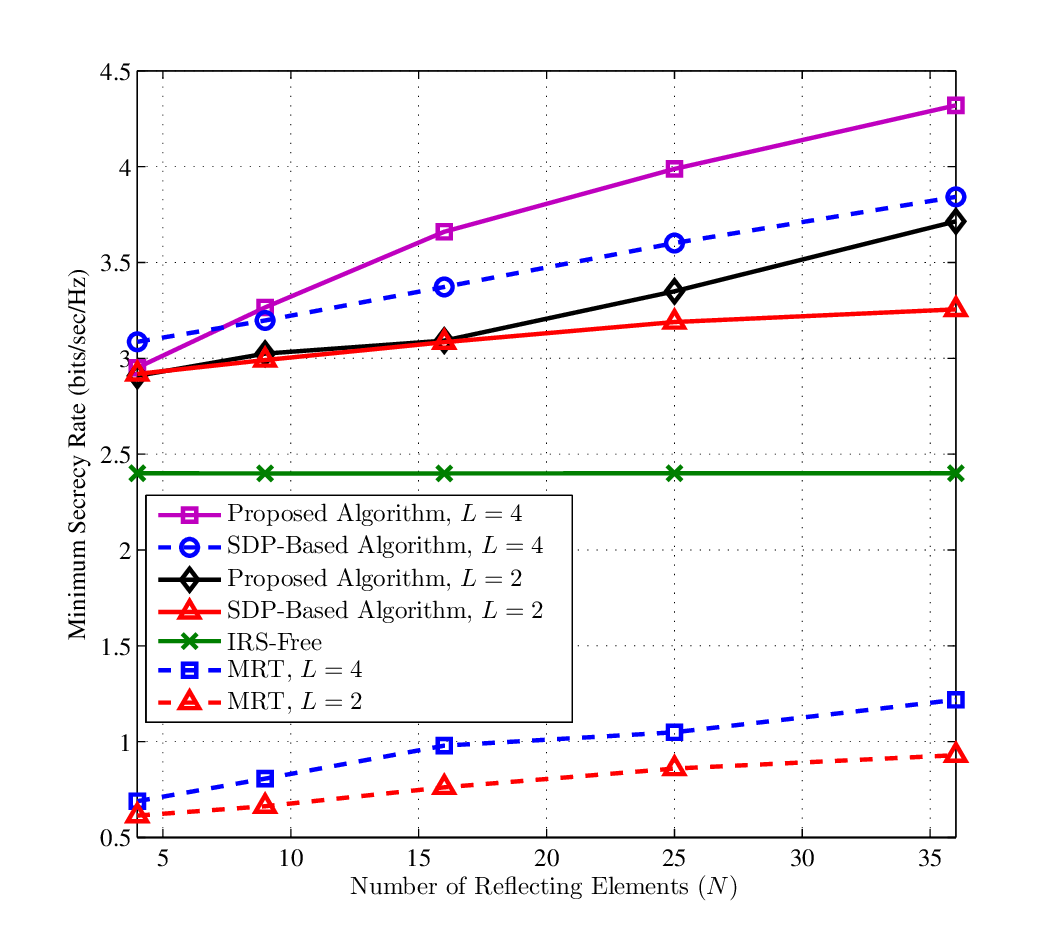}
	\caption{Minimum secrecy rate versus the number of IRS reflecting elements for two cases: $L=2$ and $L=4$.}
	\label{fig6}
\end{figure}
   \begin{figure}[t]
	\centering
	\includegraphics[width=9cm, height=8cm]{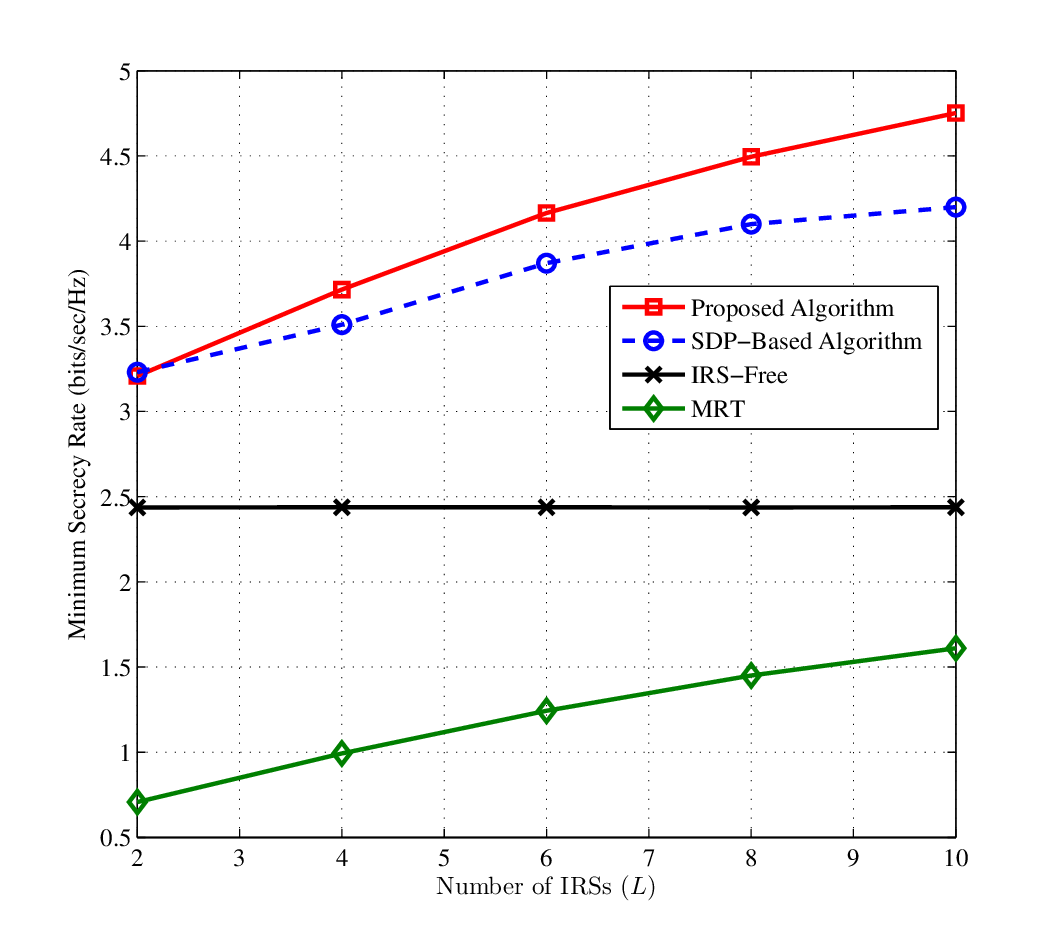}
	\caption{Minimum secrecy rate versus the number of IRSs.}
	\label{fig7}
\end{figure}
	
Fig. \ref{fig6} shows the minimum secrecy rate versus different number of IRS phase shifts $(N)$ for cases I and II. In this simulation the number of antennas and power at the BS are set as $M=4$ and $P_\text{BS}=30\text{dBm}$, respectively. Furthermore, in this simulation, the number of IRS phase shifts belongs to set $ \{4,9,16,25,36\}$. As the number of IRS elements increases, the minimum secrecy rate of the proposed algorithm, SDP-based and MRT improves.
Also, the gap between Case I and Case II tends to increase because as the number of IRS elements increases, the number of added IRS elements to Case II is twice more than that the number of elements added to Case I. An important point in this figure is that at a larger number of IRSs elements due to rate leakage in SDP-based algorithm because of rank-one approximations, our proposed solution with $L=2$ works approximately as equal as SDP-based one with $L=4$. In addition, as the number of IRS elements increases, the gap between the IRS-free and MRT case slightly increases.

In Fig. \ref{fig7}, the impact of number of IRSs on secrecy rate is elaborated. Here, we have considered $M=4$, $P_\text{BS}=30\text{dBm}$, and $N=16$. The placement of IRSs for $L=2$ and $L=4$ is similar to Fig. \ref{fig2}. For $L=6$, Case I and Case II are considered simultaneously. In addition, for $L=8$, two IRSs are symmetrically added to the relevant placement of $L=6$ and all of them are shifted to the right by $2m$. Similar to $L=8$, $L=10$ is also an extended case for placement of $L=8$ by adding two IRSs in a symmetric-manner. As expected, there is a growing trend in minimum secrecy rate by increasing number of IRSs in all cases but IRS-free. Interestingly, the gap between our proposed algorithm and SDP-based one goes up by increasing number of IRSs which is other evidence shows SDP suffers from a large number of optimization variables due to rank-one-approximation drawbacks.
\section{Conclusions}
\label{sec7}
We have investigated a multi-user multi-IRS system, where BS and IRSs cooperatively serve legitimate users and try to improve the network secrecy rate concurrently. Following the relaxation of discrete phase shifts to the continuous ones, joint active and passive beamforming for maximizing minimum secrecy rate has been solved via BCD and iteratively solving two relatively challenging-to-solve sub-problems. Each sub-problem alone has been solved iteratively via
SCA method. Penalty function method was also used to combat the non-convexity of phase shifts absolute values. Afterwards, a simple but effective mapping algorithm proposed to achieve feasible phase shifts. Then,
complexity and convergence behavior of the proposed algorithm has been characterized. Numerical results has corroborated the
improved performance of proposed joint design compared to conventional SDP-based solution with rank-one approximation, IRS-free and MRT benchmarks. 
\appendices
\section{Proof of Lemma 1}
\label{Apen1}
Assuming $\ab=\ab^{\Re}+j\ab^{\Im}$, $\xb=\xb^{\Re}+j\xb^{\Im}$, and $b=b^{\Re}+jb^{\Im}$, $y=|\ab^T\xb+b|^2$ can be written as the subsequent equation
\begin{align}
\label{lem1_eq}
&y=|\ab^T\xb+b|^2 = |(\ab^{\Re})^T\xb^{\Re}-(\ab^{\Im})^T\xb^{\Im}+b^{\Re} +\nonumber\\& \qquad\qquad\qquad\qquad\qquad j(\ab^{\Re})^T\xb^{\Im}+j(\ab^{\Im})^T\xb^{\Re}+jb^{\Im}|^2 = \nonumber\\
&\big|(\ab^{\Re})^T\xb^{\Re}-(\ab^{\Im})^T\xb^{\Im}+b^{\Re}\big|^2 + \big|(\ab^{\Re})^T\xb^{\Im}+(\ab^{\Im})^T\xb^{\Re}+b^{\Im}\big|^2.
\end{align} 
Following the fact that the second equation in \eqref{lem1_eq} is differentiable w.r.t. $\xb^{\Re}$ and $\xb^{\Im}$, its Gradient equals to \eqref{lemma1_main_eq}, and the proof is completed.
\section{Proof of Theorem \ref{t1_conv}}
\label{Apen2}
Firstly, the following lemma is introduced to help with the proof.\\  
\noindent{\bf Lemma}.
For a convex function $f(\xb)$ the following equality and inequality hold
\begin{subequations}
	\begin{align}
	&\tilde f(\xb,\bar \xb) \le f(\xb),\\
	&\tilde f(\bar \xb,\bar \xb) = f(\bar \xb)
	\end{align}
\end{subequations}
where ${{\tilde f}}(\xb,\bar \xb_{})$ is the first-order Taylor approximation of $f(x)$ around $\bar{\xb}$.
\begin{proof}
Based on the first-order criterion for convex functions, $f(\xb)$ is a convex one if and only if $f(\xb) \ge f(\bar \xb) + \nabla f{(\bar \xb)^T}(\xb - \bar \xb) = \tilde f(\xb,\bar \xb),$ where $\bar{\xb}$ belongs to the $f(\xb)$ domain. This inequality will be active if $\xb=\bar{\xb}$ and $\tilde f(\bar \xb,\bar \xb) = f(\bar \xb)$.	  
\end{proof}
Assuming that the successive convex approximated problem at the $t$-th iteration is in the following form
\begin{subequations}
	\begin{align}
	&{ \max_{\xb} }\quad {f_0}(\xb) \\
	&\mbox{s.t.} \quad{{\tilde f}_i}(\xb,\bar \xb_{}^{(t)}) \ge 0,\forall i \in {1,...,N}
	\end{align}
\end{subequations}
where ${{\tilde f}_i}(\xb,\bar \xb_{}^{(t)})$ is the first-order Taylor approximation of ${f_i}(\xb)$. According to the proposed Lemma, the left-hand sides of the constraints have the following upper bounds.
\begin{equation}
f_i(\xb) \ge {{\tilde f}_i}(\xb,\bar \xb_{}^{(t)}), \forall {i}
\end{equation}
We use the update rule $\bar \xb_{}^{(t)} = \xb_{}^{\ast(t - 1)}$, where $\xb_{}^{\ast(t - 1)}$ denotes the optimum value in the previous iteration. Since $\xb_{}^{\ast(t - 1)}$ achieves the upper bound of this inequality, it will be in the feasible set of the $t$-th iteration problem. Following this fact, the optimum value of the $t$-th iteration has the objective value at least equal to the previous iteration. Hence, the obtained objective values will be non-decreasing. Given that the objective function is bounded from above, the algorithm converges to a finite objective value. Note that here if the objective function is not a concave one, or in other words, if we fail to obtain approximated convex problem, we are not able to find the optimal solution in each iteration in a straightforward manner. Therefore, concavity of the objective function is sufficient condition for the objective function growing in a non-decreasing fashion. Now, we show that converging point satisfies the KKT for the original problem.

If we assume that the Slater condition is satisfied and objective and constraints functions are differentiable, the optimum value of $\xb$ given by $\xb^{\ast(t)}$ in the $t$-th iteration should satisfy the following KKT conditions \cite{boyd}
\begin{subequations} \label{sca_kkt}
	\begin{align}
	&\nabla {f_0}({\xb^{\ast(t)}}) + \sum\limits_{i = 1}^N {\lambda _i^*\nabla {\tilde{f}_i}({\xb^{\ast(t)}},{{\bar \xb}^{(t)}})}  = 0,\\
	&{\tilde{f}_i}({\xb^{\ast(t)}},{{\bar \xb}^{(t)}}) \ge 0,\qquad \lambda _i^* \ge 0,\\ 
	&\lambda _i^*{\tilde{f}_i}({\xb^{\ast(t)}},{{\bar \xb}^{(t)}}) = 0,
	\end{align}
\end{subequations}
where $\lambda _i^*$ are the optimum dual variables. As proven before, this SCA based algorithm converges for sufficiently large values of $t$, to $ \xb_{}^{\ast(t)} = \xb_{}^{\ast(t - 1)}$. Based on the updating rule we know that $\bar \xb_{}^{(t)} = \xb_{}^{\ast(t - 1)}$. Combining these we can write \eqref{sca_kkt} as
\begin{subequations}
	\begin{align}
	&\nabla {f_0}({\xb^{\ast(t)}}) + \sum\limits_{i = 1}^N {\lambda _i^*\nabla {f_i}({\xb^{\ast(t)}})}  = 0,\\
	&{{ f}_i}({\xb^{\ast(t)}}) \ge 0,\qquad \lambda _i^* \ge 0,\\
	&\lambda _i^*{{ f}_i}({\xb^{\ast(t)}}) = 0,
	\end{align}
\end{subequations}
This is the KKT conditions for the original problem. Thus the optimum value of the converged SCA sub-problems is also a KKT point of the original problem.
\section{Complexity Analysis }
\label{Apen3}
	One of the widely used proposed algorithms to solve non-linear convex optimization problems is interior point method \cite{polikgoogooli}. Assume the following non-linear convex optimization problem
\begin{align*}
&\underset{\xb}{\max}~~f(\xb)\\
&\mbox{s.t.}~g_j(\xb)\leq 0, \quad \forall j=1,2,\cdots,m\\
&\xb\geq \mathbf{0},
\end{align*}
where $\xb\in \mathbb{R}^{n}$. According to \cite[Theorem 4.2]{polikgoogooli}, it is required $\mathcal{O}(\sqrt{n+m})$ iterations with cost per iterations $\mathcal{O}((n+m)^3)$. Thus, the computational complexity will be  $\mathcal{O}((n+m)^{3.5})$ in total. According to this discussion, the overall cost per iteration of Algorithm 1 is $\mathcal{O}((5K+1+K(M+3)+2)^{3.5})$ given that $\mathcal{P}1.1$ has $K(M+3)+2$ variables and $5K+1$ constraints. Upon simplifying this expression, this will be  $\mathcal{O}((KM)^{3.5})$. For Algorithm 2, the complexity of each SCA step is $\mathcal{O}((2NL+3K+2+3NL+4K+1)^{3.5})$. This comes from the fact that $\mathcal{P}1.2$ has $2NL+3K+2$ variables and $3NL+4K+1$ constraints. Upon simplifying this expression, it will bring a complexity $\mathcal{O}((NL+K)^{3.5})$ for each step of Algorithm 2.
\bibliographystyle{IEEEtran}
\bibliography{IEEEabrv,library}

\end{document}